\newtheorem{theorem}{Theorem}[section]
\newtheorem{proposition}[theorem]{Proposition}
\newtheorem{corollary}[theorem]{Corollary}
\newtheorem{lemma}{Lemma}[section]
\newcommand{\leaveout}[1]{}
\title{A General Retraining Framework for Scalable Adversarial Classification}
\author{
Bo Li \\
\texttt{bo.li.2@vanderbilt.edu} \\
Vanderbilt University \\
\And
 Yevgeniy Vorobeychik   \\
\texttt{yevgeniy.vorobeychik@vanderbilt.edu} \\
Vanderbilt University \\
\And
 Xinyun Chen  \\
\texttt{jungyhuk@gmail.com}          \\
Shanghai Jiaotong University \\
}
\begin{document}
% \nipsfinalcopy is no longer used

\maketitle

\begin{abstract}
Traditional classification algorithms assume that training and test
data come from similar distributions.
This assumption is violated in adversarial settings,
where malicious actors modify instances to evade detection.
A number of 
custom methods have been developed for both adversarial evasion attacks and 
robust learning.
We propose the first systematic and general-purpose retraining
framework which can: a) boost robustness of an \emph{arbitrary} learning
algorithm, in the face of b) a broader class of adversarial models
than any prior methods.
We show that, under natural
conditions, the retraining framework minimizes an upper
bound on optimal adversarial risk, and show how to extend this result
to account for approximations of evasion attacks.
Extensive experimental evaluation demonstrates that our retraining
methods are nearly indistinguishable from state-of-the-art algorithms
for optimizing adversarial risk, but are \emph{more general
and far more scalable}.
The experiments also confirm that without retraining, our adversarial
framework dramatically reduces the effectiveness of learning.
In contrast, retraining significantly boosts
robustness to evasion attacks without significantly compromising 
overall accuracy.
\end{abstract}

\section{Introduction}
Machine learning has been used ubiquitously for a wide variety of security tasks, such as intrusion detection, malware detection, spam filtering, and web search~\cite{fawcett1997adaptive,mahoney2002learning,fogla2006polymorphic,huang2011adversarial,kolcz2009feature}. 
Traditional machine learning systems, however, do not account for adversarial manipulation.
For example, in spam detection, spammers commonly change spam email text to evade filtering.
%The stationary assumption for the underlying data distribution of traditional machine learning can be easily violated in adversarial environments. 
%Taking advantage of such weakness of traditional machine learning systems, adversaries have expended much effort exploiting attack vectors to counteract learning systems and therefore realize malicious goals \cite{biggio2014security}. For example, evasion attacks have been applied to modify the malicious data during testing to circumvent anomaly detections; poisoning attacks inject poisonous data during training to induce a misclassification of benign/malicious data among other tactics. 
As a consequence, there have been a series of efforts to both model adversarial manipulation of learning, such as evasion and data poisoning attacks~\cite{lowd2005adversarial,karlberger2007exploiting,nelson2012query}, as well as detecting such attacks~\cite{mahoney2002learning,el2003robust} or enhancing robustness of learning algorithms to these~\cite{li2014feature,li2015scalable,teo2007convex,globerson2006nightmare,bruckner2011stackelberg}.
One of the most general of these, due to Li and Vorobeychik~\cite{li2014feature}, admits evasion attacks modeled through a broad class of optimization problems, giving rise to a Stackelberg game, in which the learner minimizes an adversarial risk function which accounts for optimal attacks on the learner.
The main limitation of this approach, however, is scalability: it can solve instances with only 10-30 features.
Indeed, most approaches to date also offer solutions that build on specific learning models or algorithms.
For example, specific evasion attacks have been developed for linear or convex-inducing classifiers~\cite{lowd2005adversarial,karlberger2007exploiting,nelson2012query}, as well as neural networks for continuous feature spaces~\cite{biggio2014security}.
Similarly, robust algorithms have typically involved non-trivial modifications of underlying learning algorithms, and either assume a specific attack model or modify a specific algorithm.
The more general algorithms that admit a wide array of attack models, on the other hand, have poor scalability.

We propose a very general retraining framework, \emph{RAD}, which can boost evasion robustness of arbitrary learning algorithms using arbitrary evasion attack models.
We show that \emph{RAD} minimizes an upper bound on optimal adversarial risk.
This is significant: whereas adversarial risk minimization is a hard bi-level optimization problem and poor scalability properties (indeed, no methods exist to solve it for general attack models), \emph{RAD} itself is extremely scalable in practice, as our experiments show.
We develop \emph{RAD} for a more specific, but very broad class of adversarial models, offering a theoretical connection to adversarial risk minimization even when the adversarial model is only approximate.
In the process, we offer a simple and very general class of local search algorithms for approximating evasion attacks, which are experimentally quite effective.
Perhaps the most appealing aspect of the proposed approach is that \emph{it requires no modification of learning algorithms}: rather, it can wrap any learning algorithm ``out-of-the-box''.
Our work connects to, and systematizes, several previous approaches which used training with adversarial examples to either evaluate robustness of learning algorithms, or enhance learning robustness.
For example, Goodfellow et al.~\cite{goodfellow2014explaining} and Kantchelian et al.~\cite{Kantchelian15} make use of adversarial examples.
In the former case, however, these were essentially randomly chosen.
The latter offered an iterative retraining approach more in the spirit of \emph{RAD}, but did not systematically develop or analyze it.
Teo et al.~\cite{teo2007convex} do not make an explicit connection to retraining, but suggest equivalence between their general invariance-based approach using column generation and retraining.
However, the two are not equivalent, and Teo et al. did not study their relationship formally.

In summary, we make the following contributions:
\begin{enumerate}
\item \emph{RAD}, a novel systematic framework for adversarial retraining,
\item analysis of the relationship between \emph{RAD} and empirical adversarial risk minimization,
\item extension of the analysis to account for approximate adversarial evasion, within a specific broad class of adversarial models,
\item extensive experimental evaluation of \emph{RAD} and the adversarial evasion model.
%\item It formulates the intractable adversarial constrained bi-level problem as a single adversarial objective optimization problem, which can be solved by coordinate descent algorithm.
%\item It can deal with arbitrary instances with binary or continuous feature space with convergence guarantee.
%\item It deal with arbitrary learning models, including Naive Bayesian, regression models, SVM, neural network, and multi-class classifiers.
%\item It guarantees the secure learning given various adversarial evasion strategies with bounded learning loss.
\end{enumerate}
We illustrate the applicability and efficiency of our method on both spam filtering and handwritten digit recognition tasks, where evasion attacks are extremely salient~\cite{klimt2004enron,lecun2010mnist}. 

\section{Learning and Evasion Attacks}

Let $X \subseteq R^n$ be the feature space, with $n$ the number of
features. For a feature vector $x_i \in X$, we let $x_{i_{j}}$ denote the
$j$th feature. Suppose that the training set $(x_i,y_i)$ is comprised of
feature vectors $x_i \in X$ generated according to some unknown
distribution $x_i \sim F$, with $y_i \in \{ -1,  +1\}$ the corresponding
binary labels, where $-1$ means the instance $x_i$ is benign, while $+1$
indicates a malicious instance. 
The learner aims to learn a classifier with parameters $\beta$,
$g_\beta: X \to \{-1, +1\}$, to label instances as malicious or
benign, using a training data set of labeled instance $D = \{(x_1, y_1),
..., (x_m, y_m)\}$. 
Let $I_{bad}$ be the subset of datapoints $i$ with $y_i = +1$.
We assume that $g_\beta(x) = \mathrm{sgn}(f_\beta(x))$ for some real-valued
function $f_\beta(x)$.

%Suppose every instance $x \sim D$ corresponds to a fixed label $y \in \{-1, +1\}$ and the label of $+1$ indicates that instance $x$ was generated by an adversary. To evade the classifier, adversary would generate an alternative instance $x' \in X$, for which $g(x') = -1$. However, the adversary still want to accomplish some goals. Therefore to evade detection, the adversary aims to modify the malicious instance $x^A$ to $x'$ by solving the following optimization problem:
% \begin{equation}
% \min \limits_{x' \in X:g(x')=-1} c(x',x^A), 
% \end{equation}
%where $c(x',x^A)$ is the cost function of modifying instance malicious instance $x^A$ to $x'$, which can be formulated as below.
%\begin{equation}
%c(x',x^A) = \alpha|x^A-x'|
%\end{equation}
%The $\alpha$ is vector that contains the weight factor $\alpha_i$ for each feature $x^{(i)}$ within the feature vector.

%%%% model evasion attacks as an evasion oracle

%%%% adversarial loss function

%\subsection{Attacker Model}

%\subsection{Defending Model}
% \subsection{Connections Between Constraint Generation and
% Retraining}

Traditionally, machine learning algorithms commonly minimize regularized empirical
risk:
\begin{equation}
\label{E:standardloss}
\min_\beta \mathcal{L}(\beta) \equiv \sum_i l(g_\beta(x_i),y_i) + \alpha \|\beta\|_p^p,
\end{equation}
where $l(\hat{y},y)$ is the loss associated with predicting $\hat{y}$
when true classification is $y$.
An important issue in adversarial settings is that instances
classified as malicious (in our convention, corresponding to
$g_\beta(x) = +1$) are associated with malicious agents who
subsequently modify such instances in order to evade the classifier
(and be classified as benign).
Suppose that adversarial evasion behavior is captured by an oracle,
$\mathcal{O}(\beta, x)$, which returns, for a given parameter vector
$\beta$ and original feature vector (in the training data) $x$, an
alternative feature vector $x'$.
Since the adversary modifies malicious instances according to this
oracle, the resulting effective risk for the defender is no longer
captured by Equation~\ref{E:standardloss}, but must account for
adversarial response.
Consequently, the defender would seek to minimize the following
\emph{adversarial risk} (on training data):
\begin{align}
\label{E:advloss}
\min_\beta \mathcal{L}_A(\beta; \mathcal{O}) = \sum_{i:y_i = -1}
l(g_\beta(x_i),-1)+ \sum_{i:y_i = +1} l(g_\beta(\mathcal{O}(\beta,x_i),+1) +\alpha \|\beta\|_p^p.
\end{align}
The adversarial risk function in Equation~\ref{E:advloss} is extremely
general: we make, at the moment, no assumptions on the nature of the
attacker oracle, $\mathcal{O}$.
This oracle may capture evasion attack models based on minimizing
evasion
cost~\cite{lowd2005adversarial,li2014feature,biggio2014security}, or
based on actual attacker evasion behavior obtained from experimental data~\cite{ke2016behavioral}.

\leaveout{
In traditional machine learning, the loss of learner can be calculated in general as
\begin{equation}
\label{eq:ld}
L_D = \sum \limits_{y_i = -1} l_\beta( x_i) + \sum \limits_{y_i = +1} l_\beta( x_i) + \alpha ||\beta||_p^p,
\end{equation}
where $l_\beta(\cdot)$ is the loss function for predicting $g: x_i \to y_i$ given the model parameter $\beta$.
However, within evasion attacks, the adversary injects the data variances during the testing time. Therefore the defender needs to minimize the loss considering adversarial evasion behaviors ${L_D}^A$, and therefore solve the problem by the robust version of the learning problem as below.
\begin{equation}
\begin{small}
\begin{aligned}
\label{eq:ld}
\min \limits_{w} {L_D}^A &:= \sum \limits_{y_i = -1} l_\beta( x_i) + \sum \limits_{y_i = +1} l_\beta(\widetilde{x_i}) + \alpha ||\beta||_p \\
s.t.: \qquad & \forall i: y_i = 1 \\
&z_i = \arg\min \limits_{x|w^T x \leq 0} c(x,x_i) \\
& \widetilde{x_i} = 
\begin{cases}
&z_i,\  c(z_i,x_i) \leq B \\
&x_i, \  \text{otherwise}
\end{cases}
\end{aligned}
\end{small}
\end{equation}
}

\section{Adversarial Learning through Retraining}

A number of approaches have been proposed for making learning
algorithms more robust to adversarial evasion
attacks~\cite{dalvi2004adversarial,li2014feature,li2015scalable,teo2007convex,bruckner2011stackelberg}.
However, these approaches typically suffer from three limitations: 1)
they usually assume specific attack models, 2) they require
substantial modifications of learning algorithms, and 3) they commonly
suffer from significant scalability limitations.
For example, a recent, general, adversarial learning algorithm
proposed by Li and Vorobeychik~\cite{li2014feature} makes use of
constraint generation, but does not scale beyond 10-30 features.

Recently, retraining with adversarial data has been proposed as a means to increase
robustness of
learning~\cite{goodfellow2014explaining,Kantchelian15,teo2007convex}.\footnote{Indeed,
neither Teo et al.~\cite{teo2007convex} nor Kantchelian et
al.~\cite{Kantchelian15} focus on retraining as a main contribution,
but observe its effectiveness.}
However, to date such approaches have not been systematic.

We present a new algorithm, \emph{RAD}, for retraining with adversarial data
(Algorithm~\ref{algo:retraining}) which systematizes some of
the prior insights, and enables us to provide a formal connection
between retraining with adversarial data, and adversarial risk
minimization in the sense of Equation~\ref{E:advloss}.
%To capture adversarial evasions, we propose a robust learning algorithm $RobL$ which allows the learner to generate a set of ``adversarial strategy" $x'$ for each malicious instance $x^A$ and retrain the model with these new instances added until the convergence of loss function.
\begin{algorithm}[thb]
\caption{RAD: Retraining with ADversarial Examples}
\label{algo:retraining}
\begin{algorithmic}[1]
\STATE \textbf{Input}: {training data $X$}
\STATE $N_i \gets \emptyset \ \forall \ i \in I_{bad}$
\REPEAT
\STATE $\beta \gets$ Train($X \cup_i N_i$)
\STATE $new \gets \emptyset$
\FOR {$i \in I_{bad}$}
    \STATE $x'=\mathcal{O}(\beta,x_i)$
    \IF{$x' \notin N_i$}
      \STATE $new \gets new \cup x'$
    \ENDIF
    \STATE $N_i \gets N_i \cup x'$
  \ENDFOR
%\STATE $L \gets$ $Loss(\beta, X)$
%\ENDWHILE
\UNTIL{$new = \emptyset$}
\STATE \textbf{Output}: {Parameter vector $\beta$}
\end{algorithmic}
\end{algorithm}
The \emph{RAD} algorithm is quite general. 
At the high level, it starts with the original training data $X$ and
iterates between computing a classifier and adding adversarial
instances to the training data that evade the previously computed
classifier, if they are not already a part of the data.

A baseline termination condition for \emph{RAD} is that no new
adversarial instances can be added (either because instances generated
by $\mathcal{O}$ have already been previously added, or because the
adversary's can no longer benefit from evasion, as discussed more
formally in Section~\ref{s:game}).
%takes the training
%datasets and applies the adversarial models on it to generate
%attack strategies to retrain the model and improve its resilience.
%It is not difficult to
%see that, as described so far, the algorithm may never terminate if we include
%continuous features.
If the range of $\mathcal{O}$ is finite (e.g., if the feature space is
finite), \emph{RAD} with this termination condition would always terminate.
%\begin{proposition}
%Suppose that the range of $\mathcal{O}$ is finite.
%Then \emph{RAD} will terminate after a finite number of iterations.
%\end{proposition}
%The proof of this proposition is direct, and we provide it (along with
%other omitted proofs) in the Supplement.
%\begin{proof}
%Suppose Algorithm~\ref{algo:retraining} does not terminate after
%iteration $k$.
%Then $new \ne \emptyset$, and there is some $x \notin \mathcal{X}$
%which is added to the dataset.
%Since the range of $\mathcal{O}$ is finite, there must be some
%iteration where there is no $x$ that can be added which is not already
%in $\mathcal{X}$.
%Thus, $new = \emptyset$ in that iteration, and the algorithm terminates.
%\end{proof}
%A special case of this proposition is that the algorithm will always
%terminate if the feature space $X$ is finite.
In practice, our experiments demonstrate that when termination
conditions are satisfied, the number of \emph{RAD} iterations is quite
small (between 5 and 20).
Moreover, while \emph{RAD} effective increases the importance of malicious
instances in training, this does not appear to significantly harm
classification performance in a non-adversarial setting.
In general, we can also control the number of rounds directly, or use
an additional termination condition, such as that the parameter vector
$\beta$ changes little between successive iterations.
However, we assume henceforth that there is no fixed iteration limit or convergence check.

%In practice, to ensure that the algorithm always terminates, we can also impose a
%strict iteration limit, or check convergence (in terms of how much
%the parameter vector $\beta$ changes in successive iterations).
%In the remainder of the paper, however, we assume that there is no
%fixed iteration limit or convergence check.
To analyze what happens if the algorithm terminates, define the regularized empirical risk in the last
iteration of \emph{RAD} as:
\begin{equation}
\label{E:radloss}
\mathcal{L}^R_N(\beta, \mathcal{O}) = \sum_{i \in D \cup N} l(g_\beta(x_i),y_i) + \alpha ||\beta||_p^p,
\end{equation}
where a set $N = \cup_i N_i$ of data points has been added
by the algorithm (we omit its dependence on $\mathcal{O}$ to simplify
notation).
We now characterize the relationship between $\mathcal{L}^R_N(\beta, \mathcal{O})$
and $\mathcal{L}_A^*(\mathcal{O}) = \min_\beta \mathcal{L}_A(\beta,
\mathcal{O})$.
\begin{proposition}
\label{prop:defender_loss}
$\mathcal{L}_A^*(\mathcal{O}) \le \mathcal{L}^R_N(\beta, \mathcal{O})$
for all $\beta, \mathcal{O}$.
\end{proposition}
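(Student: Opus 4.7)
The plan is to pick $\beta$ to be the parameter vector returned by \emph{RAD} (so that $N$ really is the set accumulated during the run that produced $\beta$) and then chain two inequalities: a pointwise one relating $\mathcal{L}_A(\beta,\mathcal{O})$ to $\mathcal{L}^R_N(\beta,\mathcal{O})$, and the trivial one $\mathcal{L}_A^*(\mathcal{O}) \le \mathcal{L}_A(\beta,\mathcal{O})$ from the definition of $\mathcal{L}_A^*$ as a minimum. So the real content is the first inequality.

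For the first inequality I would exploit the termination condition of Algorithm~\ref{algo:retraining}. Termination means that on the very last pass the set \texttt{new} stayed empty, i.e.\ for every $i \in I_{bad}$, the adversarial example $\mathcal{O}(\beta,x_i)$ was already in $N_i$. In particular, the indexed point $(\mathcal{O}(\beta,x_i),+1)$ belongs to $N \subseteq D \cup N$ for every bad $i$. Therefore, splitting the \emph{RAD} risk by label,
\begin{align*}
\mathcal{L}^R_N(\beta,\mathcal{O})
&= \sum_{i:\,y_i=-1} l(g_\beta(x_i),-1) + \sum_{(x,+1)\in D\cup N} l(g_\beta(x),+1) + \alpha\|\beta\|_p^p \\
&\ge \sum_{i:\,y_i=-1} l(g_\beta(x_i),-1) + \sum_{i\in I_{bad}} l(g_\beta(\mathcal{O}(\beta,x_i)),+1) + \alpha\|\beta\|_p^p \\
&= \mathcal{L}_A(\beta,\mathcal{O}),
\end{align*}
where the middle inequality uses nonnegativity of $l(\cdot,\cdot)$ to drop all terms in the \emph{RAD} sum other than the adversarial images $\mathcal{O}(\beta,x_i)$ on the positive side (the negative-labeled part is identical in both expressions). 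Combining with $\mathcal{L}_A^*(\mathcal{O}) \le \mathcal{L}_A(\beta,\mathcal{O})$ yields the claim.

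The main obstacle is really just bookkeeping: one has to be careful that the sum $\sum_{i\in D\cup N}$ in \eqref{E:radloss} treats $D\cup N$ as an indexed multiset of (point, label) pairs, so that each $\mathcal{O}(\beta,x_i)$ contributes its own $l(\cdot,+1)$ term without being collapsed with an identical original point, and that the ``$\beta$'' appearing in the oracle call on line~7 of the algorithm is the same $\beta$ that is returned (both are true of Algorithm~\ref{algo:retraining} because termination is tested immediately after the oracle call on the freshly-trained $\beta$). A secondary subtlety worth flagging is that the quantifier ``for all $\beta,\mathcal{O}$'' in the proposition should be read as ``for any $\mathcal{O}$ and the corresponding $\beta$ returned by \emph{RAD} under that $\mathcal{O}$,'' since the argument genuinely relies on the termination property; I would state this explicitly at the start of the proof so the use of nonnegativity of $l$ and of the termination condition are transparent.
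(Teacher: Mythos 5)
Your proof is correct and follows essentially the same route as the paper's: both hinge on the termination condition ($\mathcal{O}(\beta,x_i)\in N_i$ for every $i\in I_{bad}$ once $new=\emptyset$) to recognize each adversarial loss term as one summand of $\mathcal{L}^R_N$, drop the remaining nonnegative terms, and finish with $\mathcal{L}_A^*(\mathcal{O})\le\mathcal{L}_A(\beta;\mathcal{O})$. The only difference is how the ``for all $\beta$'' quantifier is handled: rather than reinterpreting it as you do, the paper first writes $\mathcal{L}^R_N(\beta,\mathcal{O})\ge\mathcal{L}^R_N(\bar{\beta},\mathcal{O})$ for $\bar{\beta}\in\arg\min_{\beta}\mathcal{L}^R_N(\beta,\mathcal{O})$ --- implicitly identifying this minimizer with the $\beta$ trained in the final iteration --- and then applies your termination argument at $\bar{\beta}$, so the stated bound holds for arbitrary $\beta$.
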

\begin{proof}
Let $\bar{\beta} \in \arg\min_{\beta}  \mathcal{L}^R_N(\beta,
\mathcal{O})$.
%Now, observe that in the last iteration of the algorithm, $new =
%\emptyset$ (since it must terminate after this iteration), which means
%that $\mathcal{O}(\beta,x_i) \in N_i$ for all $i \in I_{bad}$.
Consequently, for any $\beta$,
% (omitting regularization terms, since
% they are identical throughtout),
\begin{align*}
\mathcal{L}^R_N(\beta, \mathcal{O}) &\ge \mathcal{L}^R_N(\bar{\beta},
\mathcal{O})\\
&= \sum_{i: y_i = -1} l(g_{\bar{\beta}}(x_i),-1) +\sum_{i:y_i = +1} \sum_{j
  \in N_i \cup x_i} l(g_{\bar{\beta}} (x_i),+1) + \alpha ||\bar{\beta}||_p^p\\
&\ge \sum_{i: y_i = -1} l(g_{\bar{\beta}}(x_i),-1) +\sum_{i:y_i = +1}
l(g_{\bar{\beta}} (\mathcal{O}(\bar{\beta},x_i)),+1) + \alpha ||\bar{\beta}||_p^p\\
&\ge \min_\beta \mathcal{L}_A(\beta; \mathcal{O}) = \mathcal{L}_A^*(\mathcal{O}),
\end{align*}
where the second inequality follows because in the last iteration of the algorithm, $new =
\emptyset$ (since it must terminate after this iteration), which means
that $\mathcal{O}(\beta,x_i) \in N_i$ for all $i \in I_{bad}$.
\end{proof}
In words, retraining, systematized in the \emph{RAD} algorithm,
effectively minimizes an upper bound on optimal adversarial
risk.\footnote{Note that the bound relies on the fact that we are
  only adding adversarial instances, and terminate once no more
  instances can be added.  In particular, natural variations, such as
  removing or re-weighing added adversarial instances to retain original
  malicious-benign balance lose this guarantee.}
This offers a conceptual explanation for the previously observed
effectiveness of such algorithms in boosting robustness of learning to
adversarial evasion.
Formally, however, the result above is limited for several
reasons.
First, for many adversarial models in prior literature, adversarial
evasion is NP-Hard.
While some effective approaches exist to compute optimal evasion for specific
learning algorithms~\cite{Kantchelian15}, this is not true in general.
Although approximation algorithms for these models exist, using them
as oracles in \emph{RAD} is problematic, since actual attackers may
compute better solutions, and Proposition~\ref{prop:defender_loss} no
longer applies.
Second, we assume that $\mathcal{O}$ returns a unique result, but when
evasion is modeled as optimization, optima need not be unique.
Third, there do not exist effective general-purpose adversarial evasion
algorithms the use of which in \emph{RAD} would allow reasonable
theoretical guarantees.
Below, we investigate an important and very general class of
adversarial evasion models and associated algorithms which allow us to
obtain practically meaningful guarantees for \emph{RAD}.

%% TODO: experiments with clustering vs. non-clustered instances
%% (effectiveness, time)
\noindent{\bf Clustering Malicious Instances: }  A significant
enhancement in speed of the approach can be obtained by clustering
malicious instances: this would reduce both the number of iterations,
as well as the number of data points added per iteration. Experiments
(in the supplement) show that this is indeed quite effective.

\noindent{\bf Stochastic gradient descent: } \emph{RAD}
works particularly well with online methods, such as stochastic
gradient descent. Indeed, in this case we need only to make gradient
descent steps for newly added malicious instances, which can be added
one at a time until convergence.
\section{Modeling Attackers}

\subsection{Evasion Attack as Optimization}
\label{s:game}

In prior literature, evasion attacks have almost universally been
modeled as optimization problems in which attackers balance the
objective of evading the classifier (by changing the label from $+1$
to $-1$) and the cost of such
evasion~\cite{lowd2005adversarial,li2014feature}.
Our approach is in the same spirit, but is formally somewhat distinct.
In particular, we assume that the adversary has the following
two competing objectives: 1) appear as benign as possible to the
classifier, and 2) minimize modification cost.
It is also natural to assume that the attacker obtains no value from a
modification to the original feature vector if the result is still
classified as malicious.
To formalize, consider an attacker who in the original training data
uses a feature vector $x_i$ ($i \in I_{bad})$).
The adversary $i$ is solving the following optimization problem:
\begin{equation}
\label{eq:attack}
\min_{x \in X} \min\{0,f(x)\} + c(x,x_i).
\end{equation}
We assume that $c(x,x_i) \ge 0$,
$c(x,x_i) = 0$ iff $x = x_i$, and $c$ is strictly increasing
in $\|x - x_i\|_2$ and strictly convex in $x$.\footnote{Here we
  exhibit a particular general attack model, but many alternatives are
  possible, such as using constrained
  optimization. We found
  experimentally that the results are not particularly
  sensitive to the choice of the attack model.}
Because Problem~\ref{eq:attack} is non-convex, we instead minimize an upper bound:
\begin{equation}
\label{eq:attacker_q}
\min \limits_{x} Q(x) \equiv f(x) + c(x,x_i).
\end{equation}
In addition, if $f(x_i) < 0$, we return $x_i$ before solving Problem~\ref{eq:attacker_q}.
If Problem~\ref{eq:attacker_q} returns an optimal solution $x^*$ with
$f(x^*) \ge 0$, we return $x_i$; otherwise, return $x^*$.
Problem~\ref{eq:attacker_q} has two advantages.
First, if $f(x)$ is convex and $x$ real-valued, this is a (strictly) convex
optimization problem, has a unique solution, and we can solve it in polynomial time.
An important special case is when $f(x) = w^Tx$.
The second one we formalize in the following lemma.
\begin{lemma}
\label{L:bound}
Suppose $x^*$ is the optimal solution to Problem~\ref{eq:attack}, $x_i$ is
suboptimal, and $f(x^*) < 0$. 
Let $\bar{x}$ be the optimal solution to
Problem~\ref{eq:attacker_q}.
Then $f(\bar{x}) + c(\bar{x},x_i) = f(x^*) + c(x^*,x_i)$, and
$f(\bar{x}) < 0$.
\end{lemma}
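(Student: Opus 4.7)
The plan is to prove both conclusions by directly comparing the two objective functions at $x^*$ and $\bar x$, exploiting the pointwise inequality $f(x)\ge \min\{0,f(x)\}$.

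First I would observe that $Q(x) = f(x)+c(x,x_i)$ is pointwise no smaller than the objective of Problem~\ref{eq:attack}, so
\begin{equation*}
Q(\bar x)=\min_x Q(x)\;\ge\;\min_x\bigl[\min\{0,f(x)\}+c(x,x_i)\bigr]\;=\;\min\{0,f(x^*)\}+c(x^*,x_i).
\end{equation*}
Since $f(x^*)<0$ by assumption, the right-hand side equals $f(x^*)+c(x^*,x_i)$. On the other hand, evaluating $Q$ at $x^*$ gives $Q(x^*)=f(x^*)+c(x^*,x_i)$ directly, so by optimality of $\bar x$ for $Q$,
\begin{equation*}
Q(\bar x)\le Q(x^*)=f(x^*)+c(x^*,x_i).
\end{equation*}
Combining the two chains forces $f(\bar x)+c(\bar x,x_i)=f(x^*)+c(x^*,x_i)$, which is the first claim.

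For the second claim, I would use the suboptimality hypothesis on $x_i$. Evaluating the objective of Problem~\ref{eq:attack} at $x_i$ gives $\min\{0,f(x_i)\}+c(x_i,x_i)=\min\{0,f(x_i)\}\le 0$, and since $x_i$ is strictly suboptimal we get
\begin{equation*}
f(x^*)+c(x^*,x_i)=\min\{0,f(x^*)\}+c(x^*,x_i)<\min\{0,f(x_i)\}\le 0.
\end{equation*}
Using the equality just established together with $c(\bar x,x_i)\ge 0$, this yields $f(\bar x)\le f(\bar x)+c(\bar x,x_i)<0$, completing the proof.

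Both steps are short and essentially follow from the definitions. The only slightly delicate point, which I would make sure to handle explicitly, is that ``suboptimality of $x_i$'' must be interpreted as strict suboptimality to obtain the strict inequality $f(\bar x)<0$ rather than the weaker $f(\bar x)\le 0$; without strictness, $\bar x$ might sit exactly on the decision boundary and the conclusion would fail.
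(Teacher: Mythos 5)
Your proof is correct and takes essentially the same route as the paper's: you sandwich $Q(\bar x)$ between the lower bound coming from optimality of $x^*$ (via $\min\{0,f(x)\}\le f(x)$) and the upper bound $Q(\bar x)\le Q(x^*)$ to get the equality, then use strict suboptimality of $x_i$ together with $c(x_i,x_i)=0$ and $c\ge 0$ to conclude $f(\bar x)<0$. If anything, your handling of the last step is slightly more careful than the paper's, which asserts $\min\{0,f(x_i)\}+c(x_i,x_i)=0$ where only $\le 0$ is guaranteed in general; your version of the chain of inequalities works either way.
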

%\begin{proof}
%If $f(x^*) < 0$, then $\min\{0,f(x^*)\} + c(x^*,x_i) = f(x^*) +
%c(x^*,x_i)$.
%By optimality of $\bar{x}$, $f(\bar{x}) + c(\bar{x},x_i) \le f(x^*) +
%c(x^*,x_i)$.
%Since $x_i$ is suboptimal in Problem~\ref{eq:attack} and $c$ strictly
%positive in all other cases, $f(x^*) +
%c(x^*,x_i) < \min\{0,f(x_i)\} + c(x_i,x_i) =0$.
%By optimality of $x^*$, $f(x^*) + c(x^*,x_i) \le \min\{0,f(\bar{x})\}
%+ c(\bar{x},x_i) \le f(\bar{x}) + c(\bar{x}, x_i)$, which implies that
%$f(\bar{x}) + c(\bar{x},x_i) = f(x^*) + c(x^*,x_i)$.
%Consequently, $f(\bar{x}) + c(\bar{x},x_i) < 0$, and, therefore,
%$f(\bar{x}) < 0$.
%\end{proof}
The following corollary then follows by uniqueness of optimal
solutions for strictly convex objective functions over a real vector space.
\begin{corollary}
\label{C:convex}
If $f(x)$ is convex and $x$ continuous, $x^*$ is the optimal solution to
Problem~\ref{eq:attack}, $\bar{x}$ is the optimal solution to
Problem~\ref{eq:attacker_q}, and $f(x^*) < 0$, then $\bar{x} = x^*$.
\end{corollary}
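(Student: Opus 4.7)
The plan is to derive $\bar{x} = x^*$ from the strict convexity of the attacker's surrogate objective $Q$, using Lemma~\ref{L:bound} to match the values of the two problems at their respective optima.

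First I would invoke Lemma~\ref{L:bound} to obtain
\[f(\bar{x}) + c(\bar{x},x_i) = f(x^*) + c(x^*,x_i), \qquad f(\bar{x}) < 0.\]
Since $f(x^*) < 0$ by hypothesis and $f(\bar{x}) < 0$ by the lemma, the clamp $\min\{0,f(\cdot)\}$ appearing in Problem~\ref{eq:attack} coincides with $f(\cdot)$ at both $x^*$ and $\bar{x}$, so $Q(x^*) = Q(\bar{x})$. Hence $x^*$ is also a minimizer of the surrogate $Q$. Because $f$ is convex and $c(\cdot,x_i)$ is strictly convex (as assumed in Section~\ref{s:game}), the sum $Q = f + c(\cdot,x_i)$ is strictly convex on $\mathbb{R}^n$, and therefore has at most one minimizer. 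Strict convexity together with continuity of the domain then forces $\bar{x} = x^*$.

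The only subtlety is that Lemma~\ref{L:bound} assumes $x_i$ is suboptimal for Problem~\ref{eq:attack}, while the corollary drops this hypothesis. The remaining case $x_i = x^*$ is easily handled directly: since $f(x_i) < 0$, the value of Problem~\ref{eq:attack} at $x_i$ equals $Q(x_i)$, while its value at $\bar{x}$ is at most $Q(\bar{x})$ because $\min\{0,f(\cdot)\} \le f(\cdot)$ pointwise; optimality of $x_i$ therefore yields $Q(x_i) \le Q(\bar{x})$, and combined with $Q(\bar{x}) \le Q(x_i)$ from $\bar{x}$ minimizing $Q$, the same strict-convexity uniqueness argument gives $\bar{x} = x_i = x^*$. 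There is no real obstacle beyond the two bookkeeping observations that strict convexity of $c$ propagates to $Q$ and that continuity of $x$ makes the unique-minimizer conclusion meaningful.
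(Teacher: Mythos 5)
Your proof is correct and follows essentially the same route the paper intends: Lemma~\ref{L:bound} gives $Q(\bar{x}) = Q(x^*)$, and strict convexity of $Q = f + c(\cdot,x_i)$ (convex $f$ plus strictly convex $c$) yields a unique minimizer, forcing $\bar{x}=x^*$. Your separate treatment of the case $x^* = x_i$, which Lemma~\ref{L:bound} excludes but the corollary's hypotheses do not, is a small bit of added care that the paper's one-line justification glosses over.
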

A direct consequence of this corollary is that when we use
Problem~\ref{eq:attacker_q} to approximate Problem~\ref{eq:attack} and
this approximation is convex, we always return either the optimal
evasion, or $x_i$ if no cost-effective evasion is possible.
An oracle $\mathcal{O}$ constructed on this basis will therefore
return a unique solution, and supports the theoretical
characterization of \emph{RAD} above.

The results above are encouraging, but many learning problems do not
feature a convex $f(x)$, or a continuous feature space.
Next, we consider several general algorithms for adversarial evasion.

\subsection{Coordinate Greedy}
\label{s:coor}

%\subsubsection{Binary Feature Space}
We propose a very general local search framework,
\emph{CoordinateGreedy (CG)} (Algorithm~\ref{alg:coordinate} for approximating
optimal attacker evasion.
%To optimize the adversarial objective, we simulate the attacker's
%instances by coordinate descent algorithm\ref{alg:coordinate}.
\begin{algorithm}[thb]
\caption{CoordinateGreedy(\emph{CG}): $\mathcal{O}(\beta,x)$}
\label{alg:coordinate}
\begin{algorithmic}[1]
\STATE \textbf{Input}: {Parameter vector $\beta$, malicious instance $x$}
 \STATE {Set $k \leftarrow 0$ and let $x^0 \leftarrow x$}
 % \WHILE {$k < B$ and $f(x^k) \geq \delta$ }
 \REPEAT 
 % {$|f(x^A)-f(x^k)| \leq \delta$}
 \STATE{Randomly choose index $i_k \in \{1, 2, ..., n\}$ }
 \STATE{$x^{k+1} \leftarrow $GreedyImprove($i_k$)}
 %\STATE{$x^{k+1} \leftarrow x^k - \tau_k {[\triangledown Q(x^k)]}_{i_k} e_{i_k}$}
 \STATE {$k \leftarrow k+1$}
 % \UNTIL {$|Q(x^{k})-Q(x^{k-1})| \leq \epsilon$ or $f(x^k) \leq \delta$}
 %\UNTIL {$\frac{\ln Q(x^{k})} {\ln Q(x^{k-1})} \leq \epsilon$ or
 %$f(x^k) \leq \delta$}
 \UNTIL{$\frac{\ln Q(x^{k})} {\ln Q(x^{k-1})} \leq \epsilon$}
 % \ENDWHILE
% \STATE {Project $x^{k}$ to the feature space $x' \in X$}
\IF{$f(x^k) \ge 0$}
 \STATE $x^k \leftarrow x$
\ENDIF
 \STATE \textbf{Output}: {Adversarially optimal instance $x^k$.}
\end{algorithmic}
\end{algorithm}
The high-level idea is to iteratively choose a feature, and greedily
update this feature to incrementally improve the attacker's utility
(as defined by Problem~\ref{eq:attacker_q}).
In general, this algorithm will only converge to a locally optimal
solution.
We therefore propose a version with random restarts: run \emph{CG}
from $L$ random starting points in feature space.
As long as a global optimum has a basin of attraction with positive
Lebesgue measure, or the feature space is finite, this process will asymptotically converge to a
globally optimal solution as we increase the number of random
restarts.
Thus, as we increase the number of random restarts, we expect to increase
the frequency that we actual return the global optimum.
Let $p_L$ denote the probability that the oracle based on coordinate
greedy with $L$ random restarts returns a suboptimal solution to Problem~\ref{eq:attacker_q}.
The next result generalizes the bound on \emph{RAD} to allow for this,
restricting however that the risk function which we bound from above
uses the $0/1$ loss.
Let $\mathcal{L}_{A,01}^*(\mathcal{O})$ correspond to the total
adversarial risk in Equation~\ref{E:advloss}, where the loss function
$l(g_\beta(x),y)$ is the $0/1$ loss.
Suppose that $\mathcal{O}_L$ uses coordinate greedy with $L$ random restarts.
\begin{proposition}
\label{prop:defender_loss_gen}
Let $B = |I_{bad}|$.
$\mathcal{L}_{A,01}^*(\mathcal{O}) \le \mathcal{L}^R_N(\beta,
\mathcal{O}_L) + \delta(p)$ with probability at least $1-p$, where 
% $\delta(p) = (p_L +
% (-\frac{1}{2B} \log p)^{1/2})$, 
$\delta(p) = B \left(p_L + \frac{\sqrt{\log^2 p-8 B p_l \log p}-\log p}{2B}\right),$
and $\mathcal{L}^R_N(\beta,
\mathcal{O}_L)$ uses any loss function $l(g_\beta(x),y)$ which is an upper bound on
the $0/1$ loss.
\end{proposition}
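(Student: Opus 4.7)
The plan is to follow the telescoping argument in the proof of Proposition~\ref{prop:defender_loss}, but with an additive slack $E$ that records how often the randomized oracle $\mathcal{O}_L$ failed to return a true optimum of Problem~\ref{eq:attacker_q} in the last iteration of \emph{RAD}; the tail bound on $E$ then yields $\delta(p)$. Concretely, I will let $\bar\beta \in \arg\min_\beta \mathcal{L}^R_N(\beta,\mathcal{O}_L)$ and aim to establish the chain
\[
\mathcal{L}^R_N(\beta,\mathcal{O}_L) \;\ge\; \mathcal{L}^R_N(\bar\beta,\mathcal{O}_L) \;\ge\; \mathcal{L}_{A,01}(\bar\beta;\mathcal{O}) - E \;\ge\; \mathcal{L}_{A,01}^*(\mathcal{O}) - E,
\]
and then show $\Pr[E \le \delta(p)] \ge 1-p$.

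The deterministic middle step is the new ingredient. First, since $l$ pointwise upper-bounds the $0/1$ loss, $\mathcal{L}^R_N(\bar\beta,\mathcal{O}_L)$ dominates its $0/1$ counterpart $\mathcal{L}^R_{N,01}(\bar\beta,\mathcal{O}_L)$. Second, for each $i \in I_{bad}$ on which $\mathcal{O}_L$ succeeded in the last iteration, the termination condition of \emph{RAD} forces $\mathcal{O}(\bar\beta,x_i) = \mathcal{O}_L(\bar\beta,x_i) \in N_i$, so its $0/1$ loss already appears inside the inner sum over $N_i \cup \{x_i\}$ and is dominated by it exactly as in Proposition~\ref{prop:defender_loss}. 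For each of the $E$ indices on which $\mathcal{O}_L$ failed, swapping $\mathcal{O}_L(\bar\beta,x_i)$ for $\mathcal{O}(\bar\beta,x_i)$ can increase the $0/1$ loss by at most one, so the aggregate slack is bounded by $E$. The final inequality is just the definition of $\mathcal{L}_{A,01}^*(\mathcal{O})$.

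For the probabilistic step, in the final iteration of \emph{RAD} the oracle $\mathcal{O}_L$ is invoked $B=|I_{bad}|$ times with independent random-restart seeds, so conditional on $\bar\beta$ the per-instance failures form independent Bernoullis with parameter at most $p_L$, and $E$ is stochastically dominated by a $\mathrm{Binomial}(B,p_L)$ variable. I would then apply the multiplicative (Angluin--Valiant) Chernoff bound
\[
\Pr[E \ge B p_L + t] \le \exp\!\left(-\frac{t^2}{2 B p_L + t}\right),
\]
and set the right-hand side equal to $p$. This produces the quadratic $t^2 + (\log p)\,t + 2 B p_L \log p = 0$, whose positive root is $t = \tfrac{1}{2}\bigl(\sqrt{\log^2 p - 8 B p_L \log p} - \log p\bigr)$, so with probability at least $1-p$ one has $E \le B p_L + t = \delta(p)$; combining with the deterministic chain then yields the proposition.

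The main obstacle is picking the \emph{right} Chernoff variant: Hoeffding or the additive Bernstein inequalities do not invert to the displayed $\delta(p)$, and only the Angluin--Valiant tail $\exp(-t^2/(2\mu + t))$ produces the quadratic above. A secondary care point is conditional independence of the $B$ oracle calls given $\bar\beta$, which rests on the restarts of \emph{CG} being drawn freshly for each malicious instance. The deterministic chain is a routine adaptation of Proposition~\ref{prop:defender_loss} once one notes that only failures in the \emph{last} iteration matter, since earlier-iteration failures merely enlarge $N$ and strengthen the bound.
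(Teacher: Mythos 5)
Your proof follows essentially the same route as the paper's: the same deterministic chain through $\bar\beta$ using the termination condition and the fact that $l$ dominates $l_{01}$, plus the same Chernoff bound on the number of last-iteration oracle failures, whose inversion you correctly work out (via the $\exp(-t^2/(2\mu+t))$ tail) to recover the stated $\delta(p)$. The only nuance is that on the ``succeeded'' indices you should invoke Lemma~\ref{L:bound} --- the exact optimum of Problem~\ref{eq:attacker_q} can only over-estimate mistakes relative to $\mathcal{O}$ --- rather than assert $\mathcal{O}(\bar\beta,x_i)=\mathcal{O}_L(\bar\beta,x_i)$, which is precisely how the paper's proof justifies that step.
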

\leaveout{
\begin{proof}
Let $\bar{\beta} \in \arg\min_{\beta}  \mathcal{L}^R_N(\beta,
\mathcal{O}_L)$.
Consequently, for any $\beta$,
\begin{align*}
\mathcal{L}_{A,01}^*(\mathcal{O}_L) &= \min_\beta \mathcal{L}_{A,01}(\beta;
\mathcal{O}_L) \\
&\le \sum_{i: y_i = -1} l_{01}(g_{\bar{\beta}}(x_i),-1) +\sum_{i:y_i = +1}
l_{01}(g_{\bar{\beta}} (\mathcal{O}(\bar{\beta},x_i)),+1) + \alpha ||\bar{\beta}||_p^p.
\end{align*}
Now,
\begin{align*}
\sum_{i:y_i = +1} &l_{01}(g_{\bar{\beta}}
(\mathcal{O}(\bar{\beta},x_i)),+1) \le \sum_{i:y_i = +1} l_{01}(g_{\bar{\beta}}
(\mathcal{O}_L(\bar{\beta},x_i)),+1) + \delta(p)
\end{align*}
with probability at least $1 - p$,
% where $\delta(p) = (p_L + (-\frac{1}{2B} \log p)^{1/2})$,
where $\delta(p) = B p_L + \frac{\sqrt{\log^2 p-8 B p_l \log p}-\log p}{2}$,
by the Chernoff bound, and Lemma~\ref{L:bound}, which assures that an
optimal solution to Problem~\ref{eq:attacker_q} can only over-estimate mistakes.
Moreover,
\begin{align*}
\sum_{i:y_i = +1} &l_{01}(g_{\bar{\beta}}
(\mathcal{O}_L(\bar{\beta},x_i)),+1) \le\sum_{i:y_i = +1}\sum_{j \in N_i} l(g_{\bar{\beta}}
(\mathcal{O}_L(\bar{\beta},x_i)),+1),
\end{align*}
since $\mathcal{O}_L(\bar{\beta},x_i) \in N_i$ for all $i$ by
construction, and $l$ is an upper bound on $l_{01}$.
Putting everything together, we get the desired result.
\end{proof}
}

%We empirically evaluate the converging value of $P_A$ empirically as
%shown in Figure~\ref{fig:start}.

\leaveout{%%% ADD FIGURE TO SUPPLEMENT
\begin{figure}[t]
\centering 
\begin{tabular}{cc}
\includegraphics[scale=0.19]{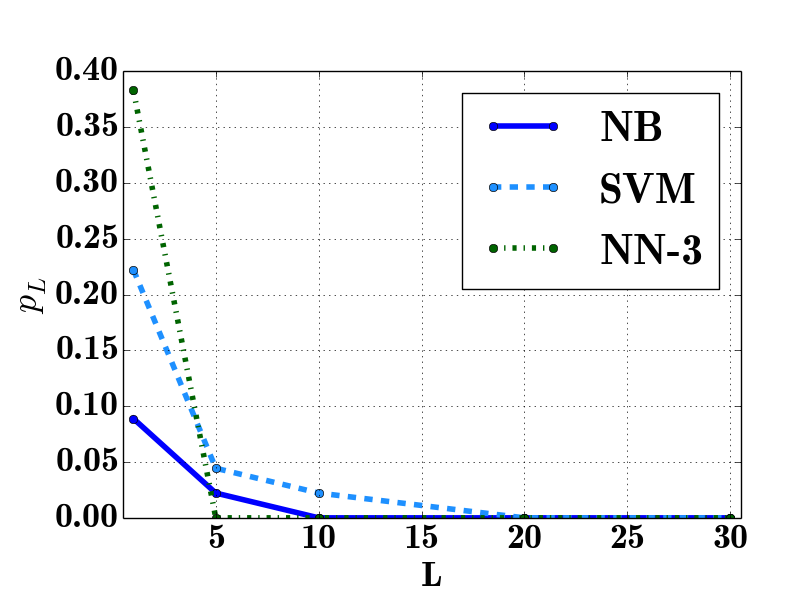} &
\includegraphics[scale=0.19]{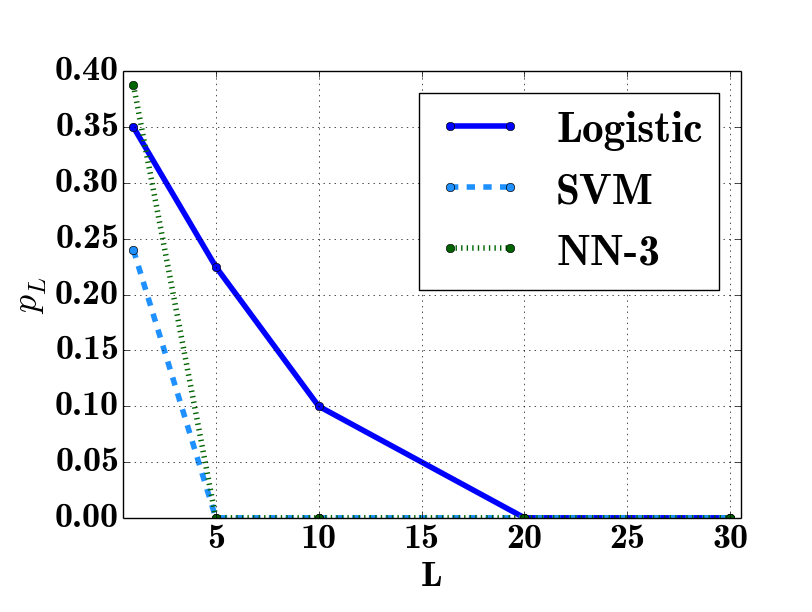} \\
(a) & (b)
\end{tabular}
\caption{The convergence of $p_L$ based on different number of starting points for (a) Binary, (b) Continuous feature space.}
\label{fig:start}
\end{figure}
%Figure~\ref{fig:start} demonstrates 
}
Experiments suggest that $p_L \rightarrow 0$ quite rapidly for an array of learning algorithms, and for either
discrete or continuous features, as we increase the number of restarts
$L$ (see the supplement for details).
Consequently, in practice retraining with coordinate
greedy nearly minimizes an upper bound on minimal adversarial
risk for a 0/1 loss with few restarts of the approximate attacker oracle.

\noindent{\bf Continuous Feature Space: }
%\label{sub:continuous}
For continuous feature space, we assume that both $f(x)$ and
$c(x,\cdot)$ are differentiable in $x$, and propose using the \emph{coordinate descent}
algorithm, which is a special case of coordinate greedy, where the
GreedyImprove step is:
\(
x^{k+1} \leftarrow x^k - \tau_k e_{i_k} \frac{\partial
  Q(x^k)}{\partial x^k_{i_k}},
\)
where $\tau_k$ is the step size and $e_{i_k}$ the direction of $i_k$th
coordinate.
Henceforth, let the origial adversarial instance $x_i$ be given; we
then simplify cost function to be only a function of $x$, denoted $c(x)$.
If the function $f(x)$ is convex and differentiable, our coordinate
descent based algorithm~\ref{alg:coordinate} can always find the
global optima which is the attacker best response $x^*$
\cite{luo1992convergence}, and Proposition~\ref{prop:defender_loss}
applies, by Corollary~\ref{C:convex}.
If $f(x)$ is not convex, then coordinate descent will only converge
to a local optimum.

\leaveout{
Now, $\frac{\partial
  Q(x^k)}{\partial x^k_{i_k}} = \frac{\partial
  f(x^k)}{\partial x^k_{i_k}} + \frac{\partial
  c(x^k)}{\partial x^k_{i_k}}.$
Fixing a coordinate $j = i_k$, we derive the partial derivative of
$f(x)$ with respect to coordinate $j$ for several common learning
algorithms, and derive the derivative of $c(x)$ for several natural
cost functions.

%If the feature space is continuous as $X \subseteq R^n$, we need to
%derive ${[\triangledown f(x^k,\beta)]}_{i_k}$ according to different
%learning models.

\paragraph{Kernel SVM}

Consider a general Kernel SVM decision function, for which $f(x) =
\sum_i a_i y_i K(x_i,x)$, where $x_i, y_i$ are the support training data
points, $a_i$ the associated dual SVM parameters, and $K(\cdot,\cdot)$
a kernel function.
Then, $\frac{\partial f(x)}{\partial x_j} = \sum_i a_i y_i
\frac{\partial K(x_i,x)}{\partial x_j}.$ 
For a linear Kernel, $\frac{\partial K(x_i,x)}{\partial x_j} = x_{i_{j}}.$
For a polynomial Kernal, $\frac{\partial K(x_i,x)}{\partial x_j} = d(c
+ x_i \cdot x)^{d-1} x_{i_{j}}$.
For an RBF Kernel, $\frac{\partial K(x_i,x)}{\partial x_j} =
\frac{1}{\sigma^2}K(x_i,x)(x_{i_{j}} - x_j)$.

\paragraph{Logistic Regression}
Given a logistic regression model, with $f(x) =
(1+e^{-w^T x})^{-1}$. Then, $\frac{\partial f(x)}{\partial x_j} =
w_j f(x)(1-f(x)).$
}

\leaveout{
\paragraph{Logistic Regression}

For example, for the Ridge regression $y = {\frac{1}{2} (y-w^Tx)^2 + {||w||}_2}$
\begin{equation}
{[\triangledown f(x^k,w)]}_{i_k} = -(y-w^Tx)w_i.
\end{equation}
For Bayesian regression $y = x^T A + \epsilon$, where $\epsilon_i \sim N(0,\sigma^2)$,
\begin{equation}
{[\triangledown f(x^k,A)]}_{i_k} = A_i.
\end{equation}
Note that the estimated coefficient can be derived as $A = {(X^T X)}^{-1} X^T y$.

For the linear-SVM, $y = \frac{1}{2} w^T w + C \sum \limits_{i = 1}^l \max(1-y_i w^T x_i,0)$,
\begin{equation}
{[\triangledown f(x^k,w)]}_{i_k} = C \cdot 
\begin{cases}
-y_i w_i ;  &\text{if}\ 1-y_i w^T x_i\ge 0 \\
0 ;  &\text{otherwise}
\end{cases}
\end{equation}
}

\leaveout{
\paragraph{Neural Network}

For the neural network with sigmoid function as the activation function, if there are three hidden layers,
$g(x) = (1+e^{-h_3(x)})^{-1}$, \\
$h_3(x) = \sum \limits_{t = 1}^{d_3} {w_3}_t {\delta_3}_t (x) +b_3$, \\
${\delta_3}_t (x)= (1+e^{-{h_2}_t(x)})^{-1}$, \\
${h_2}_t(x) = \sum \limits_{j = 1}^{d_2} {w_2}_j {\delta_2}_j (x) +b_2$, \\
${\delta_2}_j (x)= (1+e^{-{h_1}_j(x)})^{-1}$, \\
${h_1}_j(x) = \sum \limits_{p = 1}^{d_1} {w_1}_p {\delta_1}_p (x) +b_1$, \\
${\delta_1}_p (x)= (1+e^{-{h_0}_p(x)})^{-1}$, \\
${h_0}_p(x) = \sum \limits_{l = 1}^{n} {w_1}_l x_l +b_0$, \\

Therefore we have
\begin{equation}
%\begin{scriptsize}
\begin{aligned}
&\frac{\partial f(x)}{\partial x_j}= g(x)(1-g(x)) \sum \limits_{t = 1}^{d_3} {w_{j_3}}_t {\delta_{j_3}}_t(1-{\delta_{j_3}}_t) \\
&\sum \limits_{k = 1}^{d_2} {w_{j_2}}_k {\delta_{j_2}}_j(1-{\delta_{j_2}}_k) \sum \limits_{p = 1}^{d_1} {w_{j_1}}_p {\delta_{j_1}}_p(1-{\delta_{j_1}}_p) {w_{j_0}}_i.
\end{aligned}
%\end{scriptsize}
\end{equation}
}

\leaveout{
For the neural network with rectified linear unit (ReLU) function as the activation function, if there are three hidden layers,
$g(x) = \max(0,h_3(x))$, \\
$h_3(x) = \sum \limits_{t = 1}^{d_3} {w_3}_t {\delta_3}_t (x) +b_3$, \\
${\delta_3}_t (x)= \max(0,{h_2}_t(x))$, \\
${h_2}_t(x) = \sum \limits_{j = 1}^{d_2} {w_2}_j {\delta_2}_j (x) +b_2$, \\
${\delta_2}_j (x)= \max(0,{h_1}_j(x))$, \\
${h_1}_j(x) = \sum \limits_{p = 1}^{d_1} {w_1}_p {\delta_1}_p (x) +b_1$, \\
${\delta_1}_p (x)= \max(0,{h_0}_p(x))$, \\
${h_0}_p(x) = \sum \limits_{l = 1}^{n} {w_1}_l x_l +b_0$, \\

Therefore we have when 
\begin{equation*}
\centering
\begin{scriptsize}
\begin{aligned}
&{[\triangledown f(x^k,\beta)]}_{i_k}  \\
&=
\begin{cases}
 h_3(x) & \sum \limits_{t = 1}^{d_3} {w_3}_t {h_2}_t(x) \sum \limits_{j = 1}^{d_2} {w_2}_j {h_1}_j(x)  \sum \limits_{p = 1}^{d_1} {w_1}_p {h_0}_p(x) {w_{j_0}}_i, \\
 & \text{if}\ {h_3}(x) \cdot \sum \limits_{t = 1}^{d_3} {h_2}_t(x) \cdot \sum \limits_{k = 1}^{d_2} {h_1}_k(x) \cdot \sum \limits_{p = 1}^{d_1} {h_0}_p(x) \ne 0\\
 0, \ &\text{otherwise}.
\end{cases}
\end{aligned}
\end{scriptsize}
\end{equation*}
}

\leaveout{
\paragraph{Quadratic Cost}

A simple and natural cost function is a quadratic ($l_2$) cost,
$c(x,x_i) = \frac{\lambda}{2}\|x-x_i\|_2^2$.
In this case, $\frac{\partial c(x)}{\partial x_j} = \lambda (x_j - x_{ij}).$

\paragraph{L1-distance Cost}

Another variation of the cost function is the $l_1$ cost,
$c(x,x_i) = \lambda \|x-x_i\|_1^1$.
In this case, $\frac{\partial c(x)}{\partial x_j} = \lambda .$

\paragraph{Exponential Cost}

Below, we make use of an exponential cost, which is also quite
natural: options become exponentially less desirable to an attacker as
they are more distant from their ideal attack.
We use the following cost function: 
$c(x,x_i) = \exp \left(\lambda (\sum_j (x_j -x_{ij})^2 +1)^{1/2}
\right)$.
Then, 
\[
\frac{\partial c(x)}{\partial x_j} = \frac{\lambda c(x,x_i)(x_j -x_{ij})}{(\sum_j  (x_j -x_{ij})^2 +1)^{1/2}}.
\]
}

\noindent{\bf Discrete Feature Space: }
%\label{sub:discrete}
In the case of discrete feature space, GreedyImprove step of \emph{CG}
can simply
enumerate all options for feature $j$, and choose the one most
improving the objective.

\leaveout{
\subsection{Attacks as Constrained Optimization}

A variation on the attack models above is when the attacker is solving
the following constrained optimization problem:
\begin{subequations}
\label{E:constrainedAttacks}
\begin{align}
&\min_x \min\{0,f(x)\}\\
&\mathrm{s.t.:} \quad c(x,x_i) \le B
\end{align}
\end{subequations}
for some cost budget constraint $B$.
While this problem is, again, non-convex, we can instead minimize the
convex upper bound, $f(x)$, as before, if we assume that $f(x)$ is
convex.
In this case, if the feature space is continuous, the problem can be
solved optimally using standard convex optimization
methods~\cite{Boyd04}.
If the feature space is binary and $f(x)$ is linear or
convex-inducing, algorithms proposed by Lowd and
Meek~\cite{lowd2005adversarial} and Nelson et
al.~\cite{nelson2012query}.
}

\section{Experimental Results}

The results above suggest that the proposed systematic retraining
algorithm is likely to be effective at increasing resilience to
adversarial evasion.
We now offer an experimental evaluation of this (additional results
are provided in the supplement).
We present the results for the exponential cost model, where $c(x,x_i) = \exp \left(\lambda (\sum_j (x_j -x_{ij})^2 +1)^{1/2}
\right)$.
Additionally, we simulated attacks using Problem~\ref{eq:attacker_q} formulation.
Results for other cost functions and attack models are similar, as
shown in the supplement.
Moreover, the supplement demonstrates that the approach is robust to
cost function misspecification.

%Throughout, we make use of the exponential cost model for the attacker
%described above to illustrate the efficiency, where $\lambda$ is a parameter which determines the
%relative importance of the cost term. We also evaluate other variations of the cost
%functions to show that the framework is not sensitive to the choices of cost functions.
%In particular, a high $\lambda$ implies a higher penalty for changing
%the malicious feature vector, and, consequently, weaker attacks.
%Additionally, we simulated attacks using Problem~\ref{eq:attacker_q} formulation.

%In this section, we evaluate the efficiency of our proposed method as
%well as the vulnerability of various learning models based on
%different adversarial cost sensitivity.

\noindent{\bf Comparison to Optimal: }
The first comparison we draw is to a recent algorithm, \emph{SMA},
which minimizes $l_1$-regularized adversarial risk function~\eqref{E:advloss} using
the hinge loss function.
Specifically, \emph{SMA} formulates the problem as a large mixed-integer
linear program which it solves using constraint
generation~\cite{li2014feature}.
The main limitation of SMA is scalability.
Because retraining methods use out-of-the-box learning tools and does
not involve non-convex bi-level optimization, it is
considerably more scalable.
\begin{figure}[H]
\centering
\begin{tabular}{cc}
\includegraphics[width=1.45in,height=1.19in]{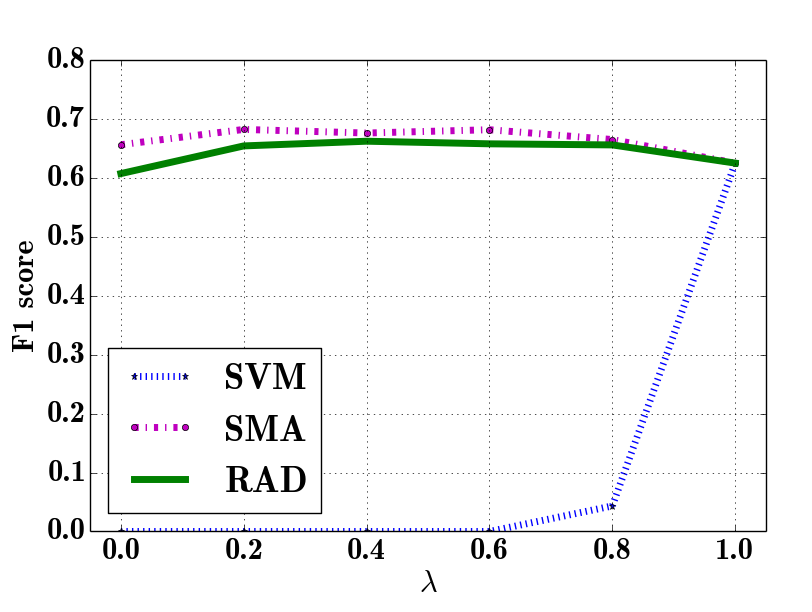} &
\includegraphics[scale=0.193]{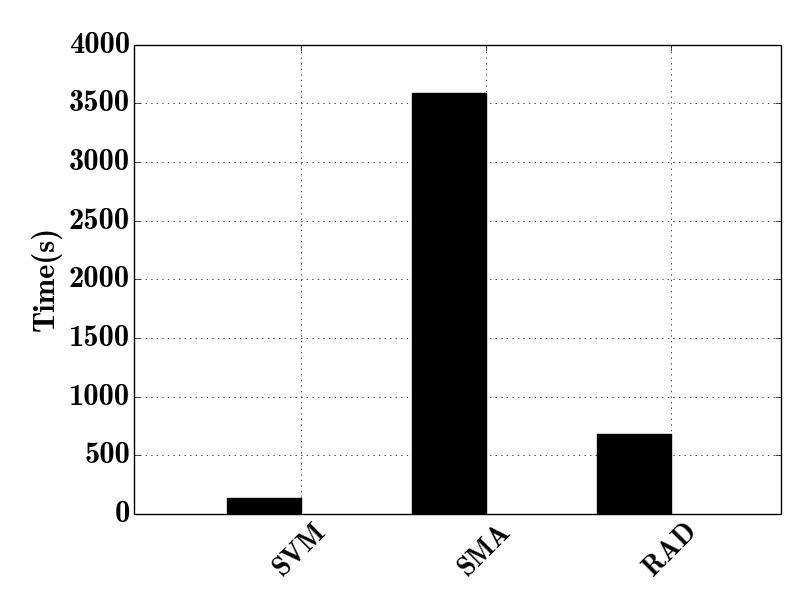} \\
(a) & (b)
\end{tabular}
\caption{Comparison between \emph{RAD} and \emph{SMA} based on the Enron dataset with 30 binary features. (a) The $F_1$ score of different algorithms corresponding to various $\lambda$; (b) the average runtime for each algorithm.}
\label{f:sma_compare}
\end{figure} 

We compared \emph{SMA} and \emph{RAD} using
Enron data~\cite{klimt2004enron}. 
% since \emph{SMA} was also developed for the constrained-optimization attacks (Problem~\eqref{E:constrainedAttacks}).
As Figure~\ref{f:sma_compare}(a) demonstrates, retraining solutions of \emph{RAD} are
nearly as good as \emph{SMA}, particularly for a non-trivial adversarial cost sensitive $\lambda$.
In contrast, a baseline implementation of SVM is significantly more
fragile to evasion attacks. However, the runtime comparison for these
algorithms in Figure~\ref{f:sma_compare}(b) shows that \emph{RAD} is
much more scalable than \emph{SMA}.
\leaveout{
In addition, Figure~\ref{f:sma_compare}(a) also reveals a surprising phenomenon observed by
Kantchelian et al.~\cite{Kantchelian15} as well: performance
actually improves based on certain adversarial cost sensitivity compared with adversary-free case where high $\lambda$ value is applied.
What this shows is that adding certain adversarial instances actually improves
classification robustness to non-adversarial data as well, likely
because it makes the learner significantly more robust to noise in the
data. 
}

%Below, we evaluate the effectiveness of retraining in
%boosting robustness of learning to evasion.
% Throughout, we make use of the exponential cost model for the attacker
% described above, where $\lambda$ is a parameter which determines the
% relative importance of the cost term.
% In particular, a high $\lambda$ implies a higher penalty for changing
% the malicious feature vector, and, consequently, weaker attacks.
% Throughout, we simulated attacks using Problem~\ref{eq:attacker_q} formulation.

\noindent{\bf Effectiveness of Retraining: }
%\subsection{Continuous Feature Space}
In this section we use the Enron dataset~\cite{klimt2004enron} and
MNIST~\cite{lecun2010mnist} dataset to evaluate the robustness of
three common algorithms in their standard implementation, and in
\emph{RAD}: logistic
regression, SVM (using a linear kernel), and a neural network (NN)
with 3 hidden layers.
In Enron data, features correspond to relative word frequencies.
2000 features were
used for the Enron and 784 for MNIST datasets.
Throughout, we use precision, recall, and accuracy as metrics.
% for the ridge regression, Bayesian regression, svm with linear
% kernel, and neural network with 1 and 3 hidden layers to evaluate
% the efficiency of restraining strategy.
We present the results for a continuous feature space here.
Results for binary features are similar and provided in the supplement.

\begin{figure*}[t]
\centering 
\begin{tabular}{cccc}
\includegraphics[scale=0.193]{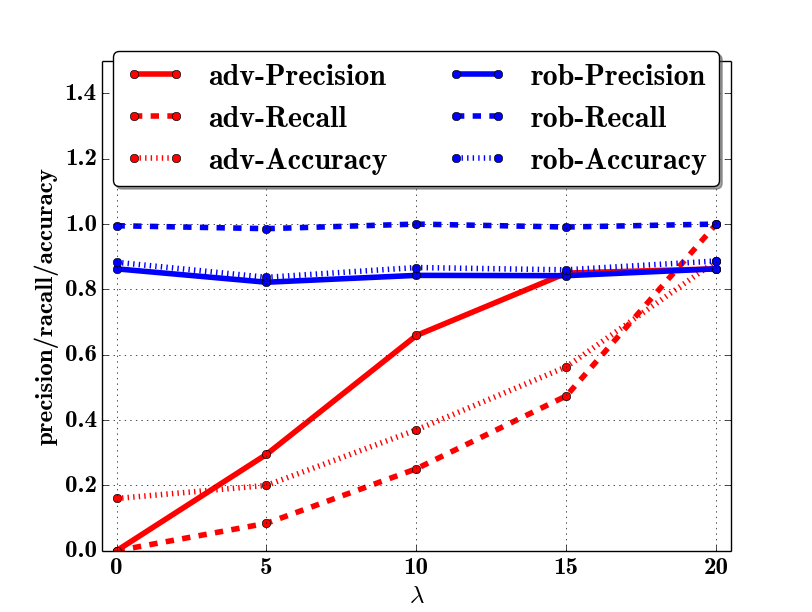} &
\includegraphics[scale=0.193]{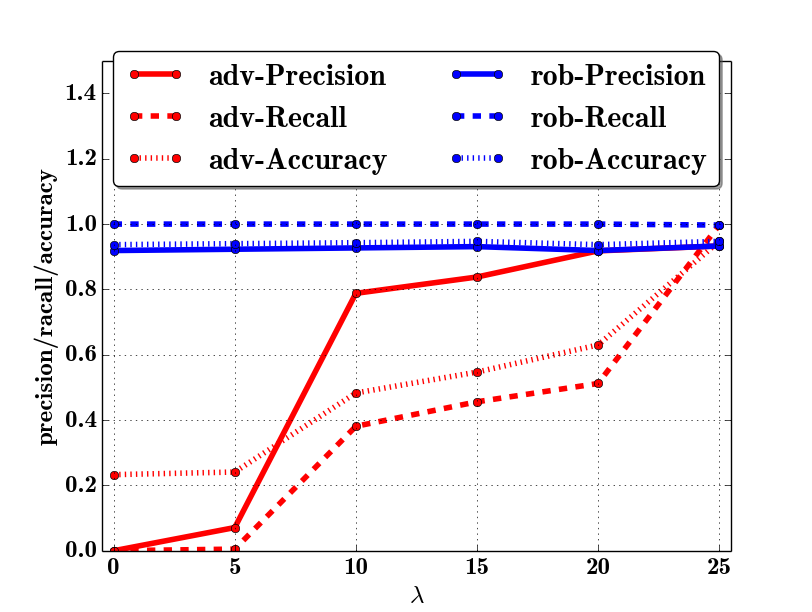} &
\includegraphics[scale=0.193]{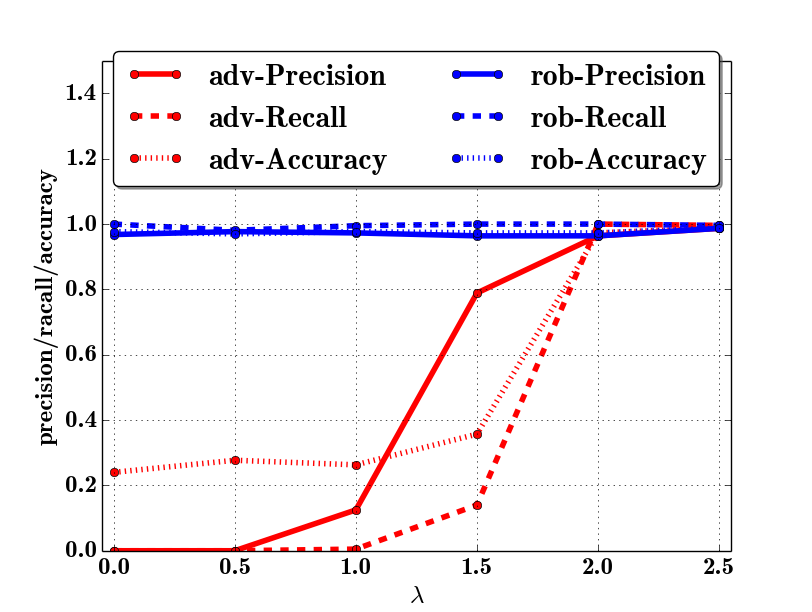} \\
\includegraphics[scale=0.193]{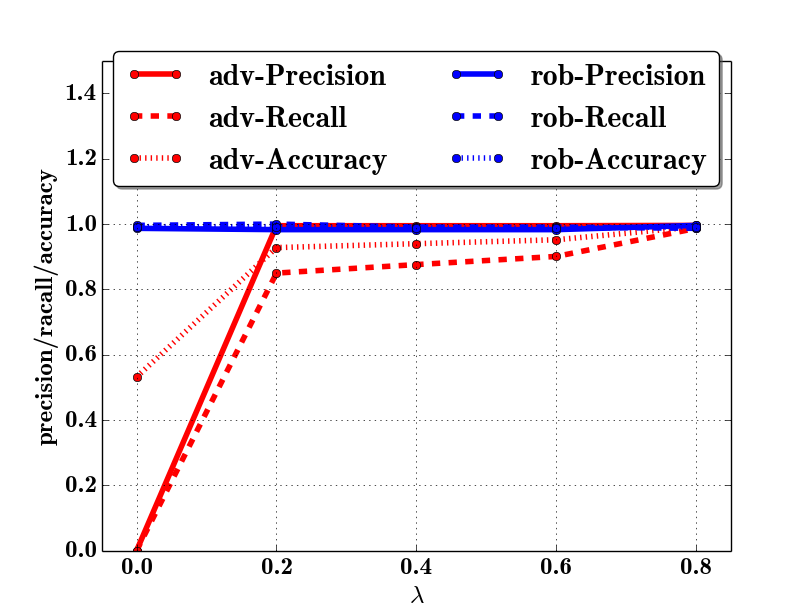} &
\includegraphics[scale=0.193]{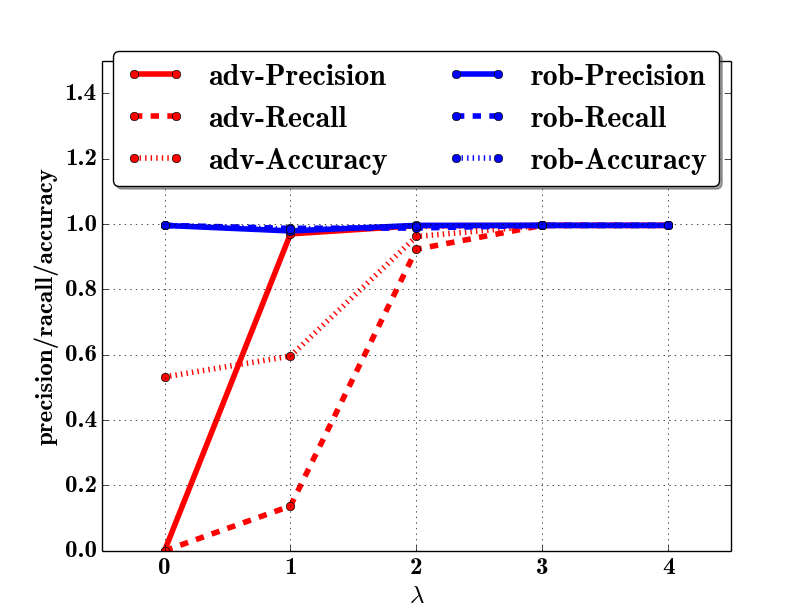} &
\includegraphics[scale=0.193]{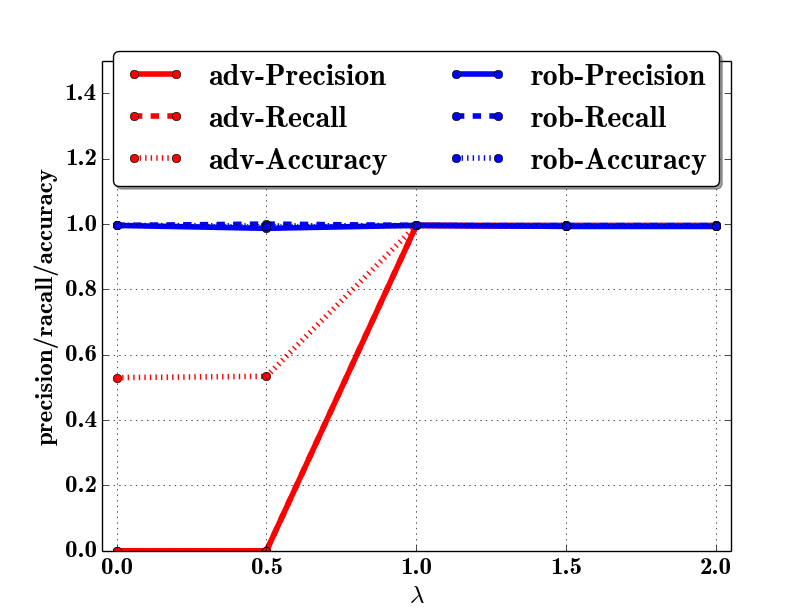}\\
(a) & (b) & (c) 
\end{tabular}
\caption{Performance of baseline (\emph{adv-}) and \emph{RAD} (\emph{rob-}) as a function of cost
  sensitivity $\lambda$ for Enron (top) and MNIST (bottom) datasets with continuous features testing on adversarial instances. (a)
  logistic regression, (b) SVM, (c) 3-layer NN.}
\label{fig:log_con}
\end{figure*}

% results for clean test data--continuous feature
\begin{figure*}[t]
\centering 
\begin{tabular}{ccc}
\includegraphics[scale=0.193]{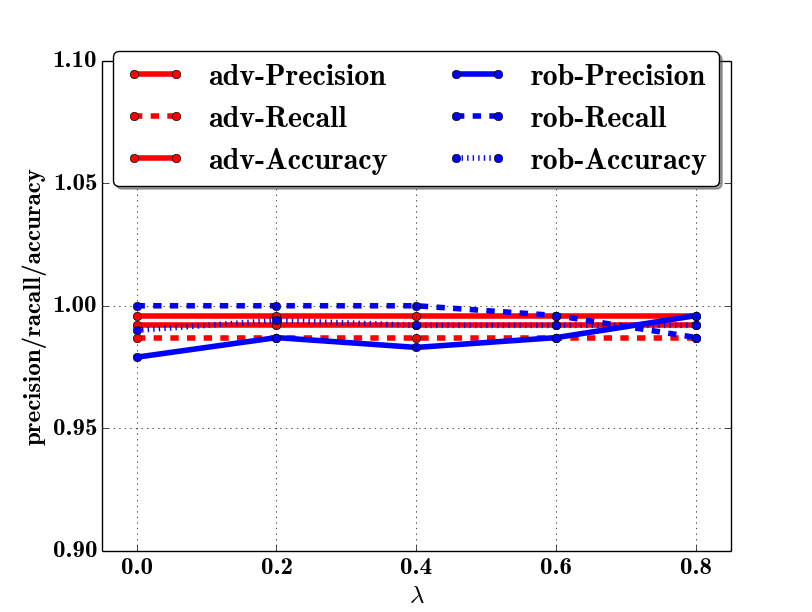} &
\includegraphics[scale=0.193]{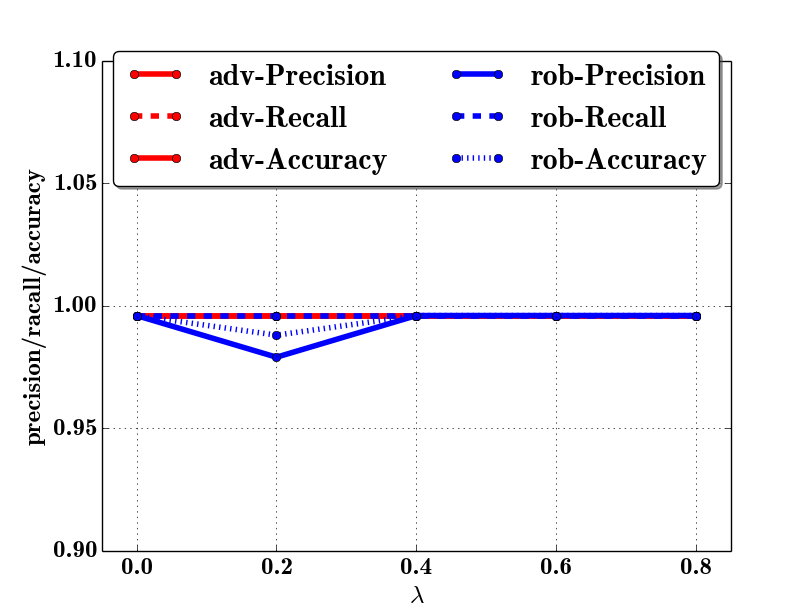} &
\includegraphics[scale=0.193]{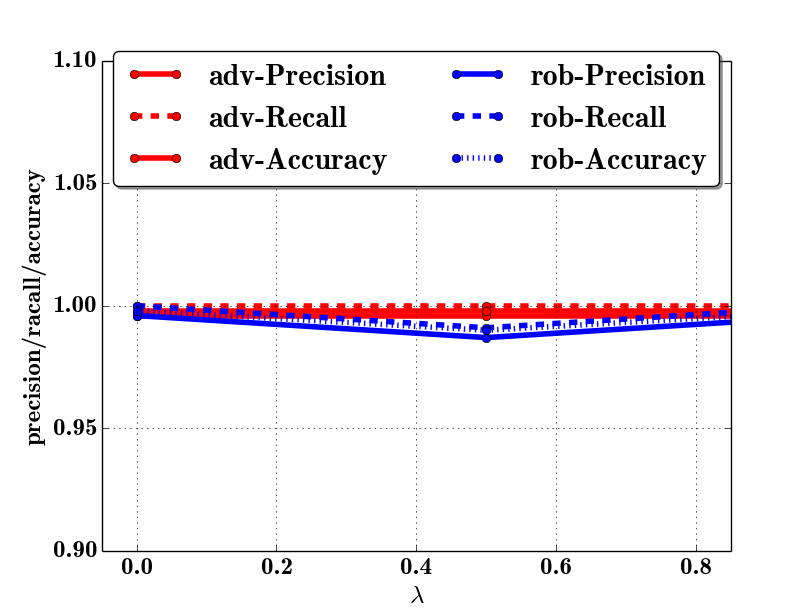} \\
(a) & (b) & (c)
\end{tabular}
\caption{Performance of baseline (\emph{adv-}) and \emph{RAD} (\emph{rob-}) as a function of cost
sensitivity $\lambda$ for MNIST dataset with continuous features testing on non-adversarial instances. (a) logistic regression, (b) SVM, (c) 3-layer NN.}
\label{fig:cleantest_con}
\end{figure*}

Figure~\ref{fig:log_con}(a) shows the performance of logistic regression,
with and without retraining, on Enron and MNIST.
The increased robustness of \emph{RAD} is immediately evident:
performance of \emph{RAD} is essentially independent of $\lambda$ on
all three measures, and substantially exceeds baseline algorithm
performance for small $\lambda$.
Interestingly, we observe that the baseline algorithms
are significantly more fragile to evasion attacks on Enron data
compard to MNIST: benign and malicious classes seem far easier to
separate on the latter than the former.
This qualitative comparison between the Enron and MNIST datasets is
consistent for other classification methods as well (SVM, NN).
These results also illustrate that the neural-network classifiers, in their baseline implementation, are
significantly more robust to evasion attacks than the (generalized) linear
classifiers (logistic regression and SVM): even with a relatively
small attack cost attacks become ineffective relatively quickly, and
the differences between the performance on Enron and MNIST data are
far smaller.
Throughout, however, \emph{RAD} significantly improves robustness to
evasion, maintaining extremely high accuracy, precision, and recall
essentially independently of $\lambda$, dataset, and algorithm used.

In order to explore whether \emph{RAD} would sacrifice accuracy
when no adversary is present, Figure~\ref{fig:cleantest_con} shows the performance of the
baseline algorithms and \emph{RAD} on a test dataset sans
evasions. Surprisingly, \emph{RAD} is never significantly worse, and
in some cases better than non-adversarial baselines: adding malicious instances appears to increase overall generalization ability.
This is also consistent with the observation by Kantchelian et al.~\cite{Kantchelian15}.

\noindent{\bf Oracles based on Human Evasion Behavior: }
To evaluate the considerable generality of \emph{RAD}, we now use a
non-optimization-based threat model, making use instead
%Our final set of experiments evaluate \emph{RAD} just for the SVM
%classifier in the context 
of observed human evasion behavior \emph{in human subject experiments}.
The data for this evaluation was obtained from the human subject
experiment by Ke et al.~\cite{ke2016behavioral} in which subjects were
tasked with the goal of evading an SVM-based spam filter, manipulating
10 spam/phishing email instances in the process.
In these experiments, Ke et al. used machine learning to develop a
model of human subject evasion behavior.
We now adopt this model as the evasion oracle, $\mathcal{O}$, injected
in our \emph{RAD} retraining framework, executing the synthetic model
for 0-10 iterations to obtain evasion examples.

Figure~\ref{fig:behavior}(a) shows the recall results for the dataset of
10 malicious emails (the classifiers are trained on Enron data, but
evaluated on these 10 emails, including evasion
attacks). Figure~\ref{fig:behavior}(b) shows the classifier performance for the Enron dataset by applying the synthetic adversarial model as the oracle.
% human behavior data
\begin{figure}[h]
\centering 
\begin{tabular}{cc}
\includegraphics[scale=0.193]{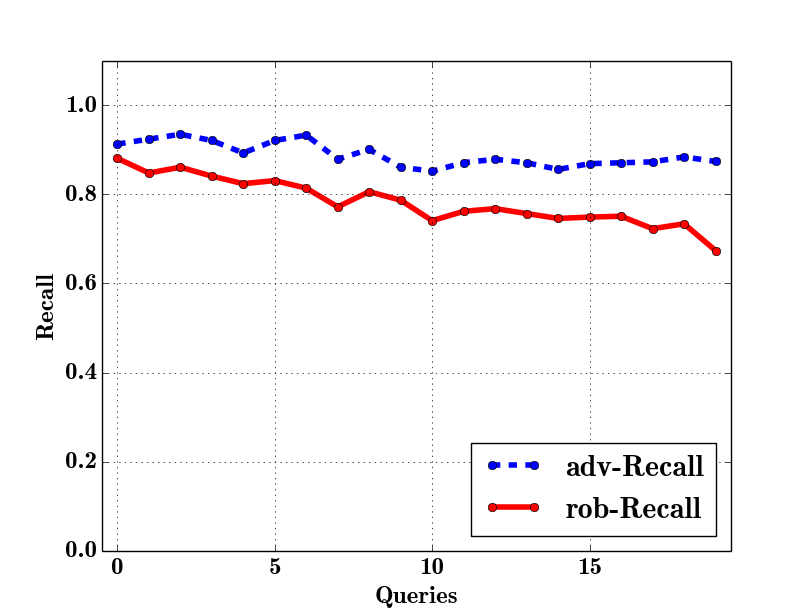} &
\includegraphics[scale=0.193]{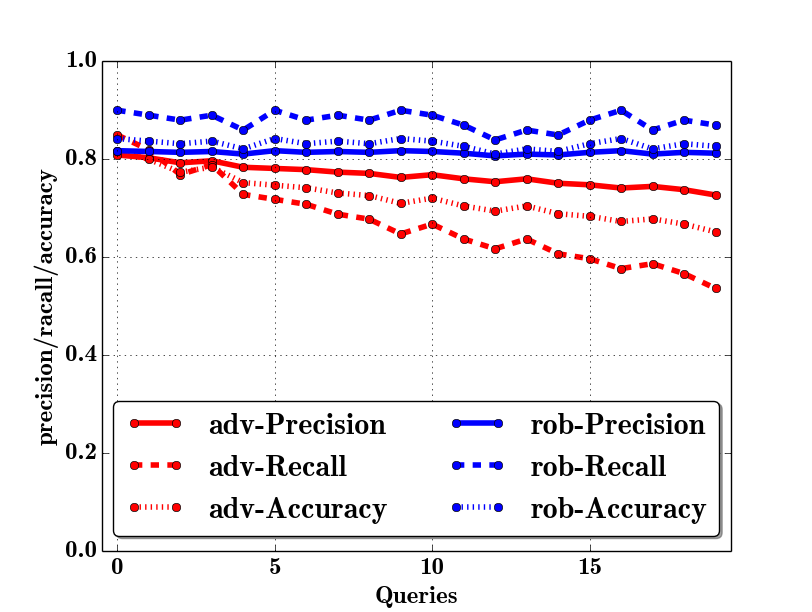}\\
(a) & (b)
\end{tabular}
\caption{\emph{RAD} (\emph{rob-}) and baseline SVM (\emph{adv-}) performance based on human subject behavior
  data over 20 queries, (a) using experimental data with actual human
  subject experiment submissions, (b) using Enron data and a synthetic
  model of human evader.}
\label{fig:behavior}
\end{figure}
We can make two high-level observations.
First, notice that human adversaries appear significantly less
powerful in evading the classifier than the automated
optimization-based attacks we previously considered.
This is a testament to both the effectiveness of our general-purpose adversarial
evaluation approach, and the likelihood that such automated attacks
likely significantly overestimate adversarial evasion risk in many
settings.
Nevertheless, we can observe that the synthetic model used in
\emph{RAD} leads to a significantly more robust classifier.
Moreover, as our evaluation used actual evasions, while the synthetic
model was used only in training the classifier as a part of
\emph{RAD}, this experiment suggests that the synthetic model can be
relatively effective in modeling behavior of human adversaries.
Figure~\ref{fig:behavior}(b) performs a more systematic study
using the synthetic model of adversarial behavior on the Enron dataset.
The findings are consistent with those only considering the 10 spam
instances: retraining significantly boosts robustness to evasion, with
classifier effectiveness essentially independent of the number of
queries made by the oracle.
\leaveout{
\begin{figure}[h]
\centering 
\begin{tabular}{cc}
\includegraphics[scale=0.19]{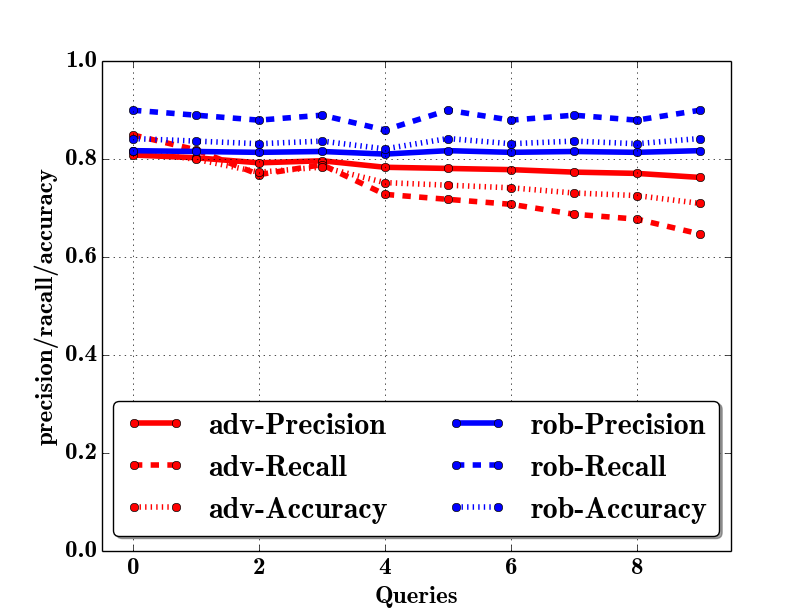} &
\includegraphics[scale=0.19]{images/behavior/behavior_oracle20.png} \\
(a) & (b)
\end{tabular}
\caption{Robust retraining performance using synthetic model as adversarial oracle for both RAD and testing evasion data for (a) 10 submissions, (b) 20 submissions.}
\label{fig:behavior_bothoracle}
\end{figure}
}

\leaveout{
% svm-linear results for enron
\begin{figure}[H]
\centering 
\begin{tabular}{cc}
\includegraphics[scale=0.19]{images/svm_enron2000.png} &
\includegraphics[scale=0.19]{images/svm_mnist784.png} \\
(a) & (b)
\end{tabular}
\caption{SVM performance based on different adversarial cost sensitivity $\lambda$ for Enron dataset with different number of features as (a) 1000, (b) 2000.}
\label{fig:svm_enron}
\end{figure}

% svm-linear results for mnist
% chenge below
\begin{figure}[H]
\centering 
\begin{tabular}{cc}
\includegraphics[scale=0.19]{images/svm_mnist784.png} \\
(a) & (b)
\end{tabular}
\caption{SVM performance based on different adversarial cost sensitivity $\lambda$ for MNIST dataset with different number of features as (a) 627, (b) 784.}
\label{fig:svm_mnist}
\end{figure}
}

\leaveout{
% nn1 results for enron
\begin{figure}[H]
\centering 
\begin{tabular}{cc}
\includegraphics[scale=0.19]{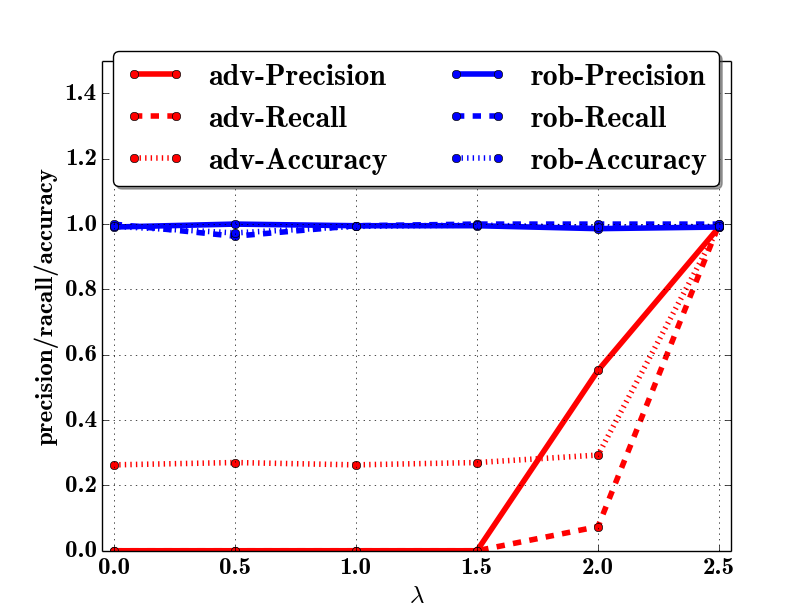} &
\includegraphics[scale=0.19]{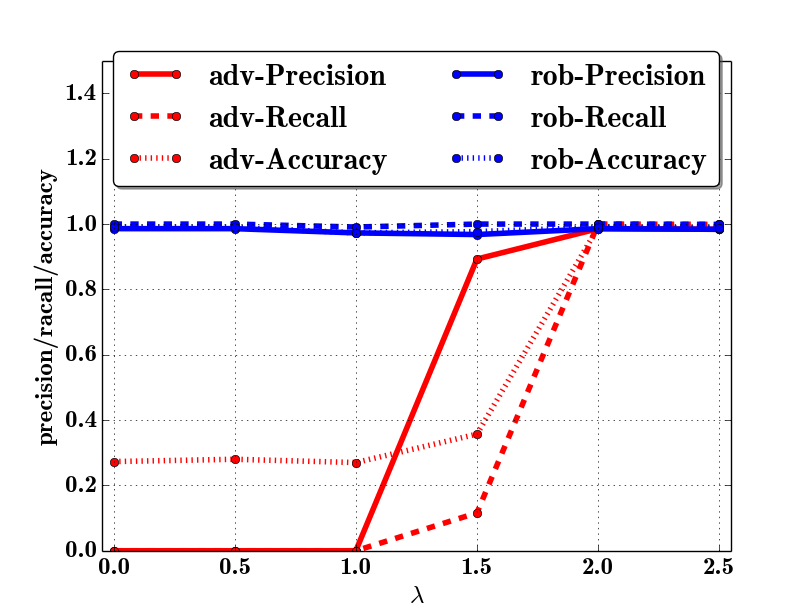} \\
(a) & (b)
\end{tabular}
\caption{Neural network with 1 hidden layers performance based on different adversarial cost sensitivity $\lambda$ for Enron dataset with different number of features as (a) 1000, (b) 2000.}
\label{fig:nn1_enron_continu}
\end{figure}

% nn1 results for mnist
\begin{figure}[H]
\centering 
\begin{tabular}{cc}
\includegraphics[scale=0.19]{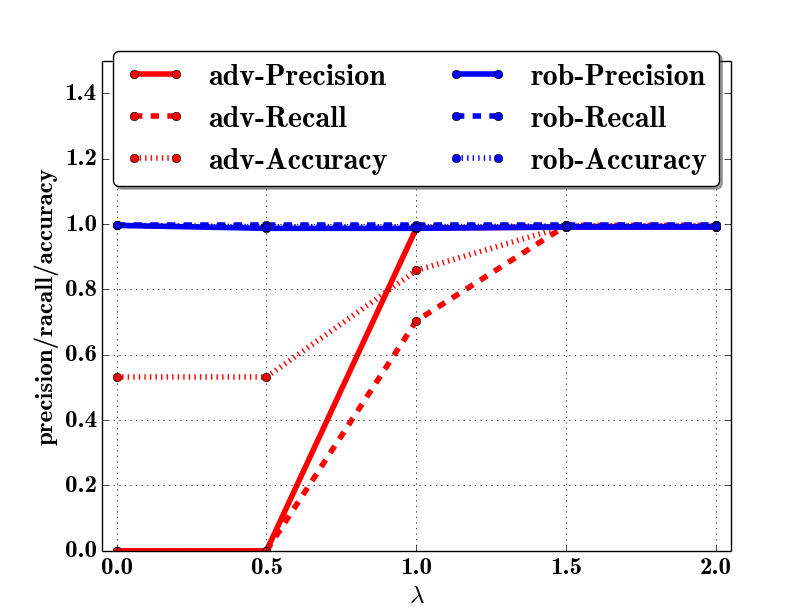} &
\includegraphics[scale=0.19]{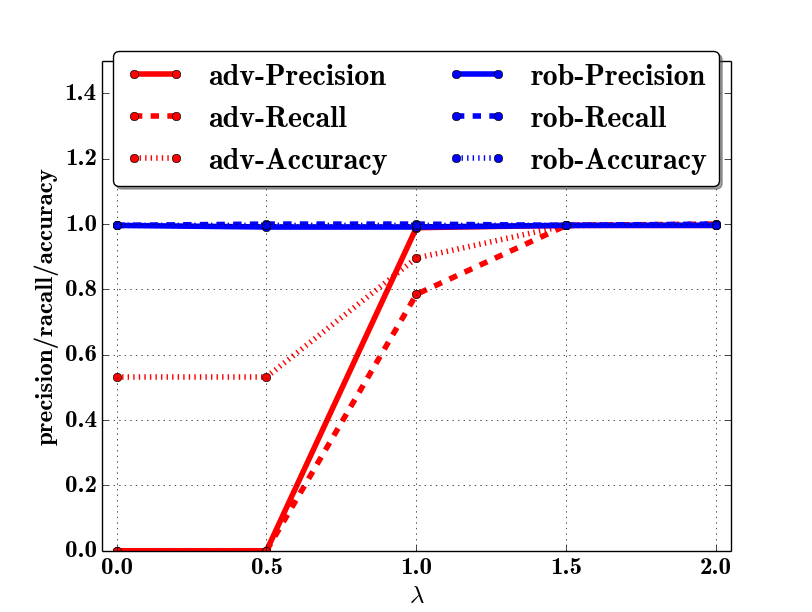} \\
(a) & (b)
\end{tabular}
\caption{Neural network with 1 hidden layers performance based on different adversarial cost sensitivity $\lambda$ for MNIST dataset with different number of features as (a) 627, (b) 784.}
\label{fig:nn1_mnist_continu}
\end{figure}
}

\leaveout{
% nn3 results for enron
\begin{figure}[H]
\centering 
\begin{tabular}{cc}
\includegraphics[scale=0.19]{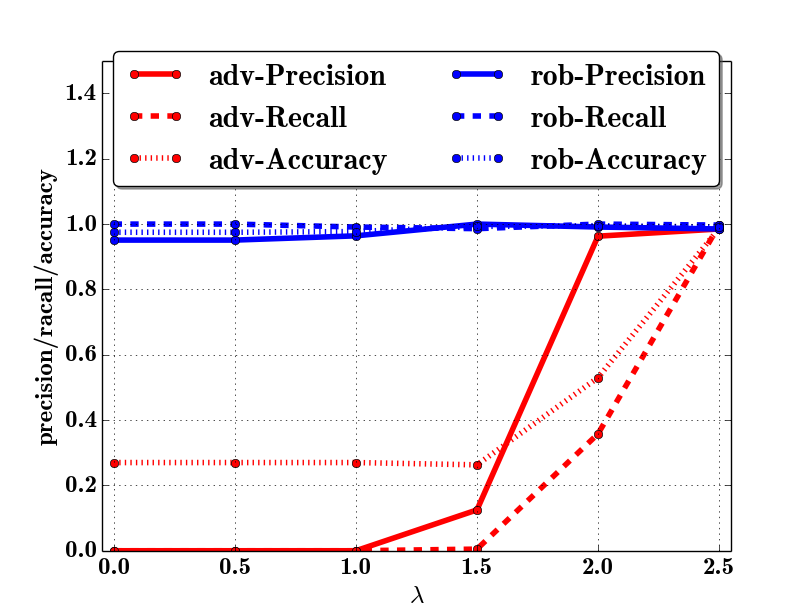} &
\includegraphics[scale=0.19]{images/nn3_enron_continu2000.png} \\
(a) & (b)
\end{tabular}
\caption{Neural network with 3 hidden layers performance based on different adversarial cost sensitivity $\lambda$ for Enron dataset with different number of features as (a) 1000, (b) 2000.}
\label{fig:nn3_enron_continu}
\end{figure}
 
% nn3 results for mnist

\begin{figure}[H]
\centering 
\begin{tabular}{cc}
\includegraphics[scale=0.19]{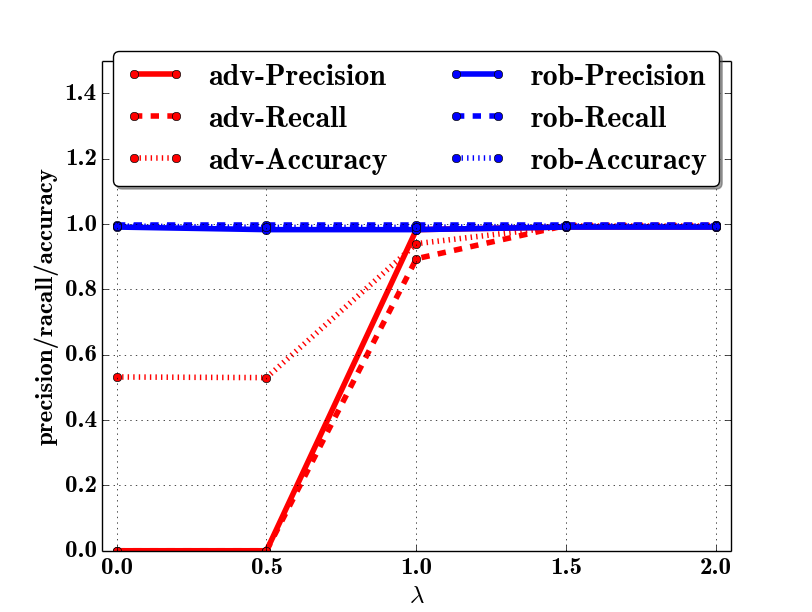} &
\includegraphics[scale=0.19]{images/nn3_mnist_784.png} \\
(a) & (b)
\end{tabular}
\caption{Neural network with 3 hidden layers performance based on different adversarial cost sensitivity $\lambda$ for MNIST dataset with different number of features as (a) 627, (b) 784.}
\label{fig:nn3_mnist_continu}
\end{figure}
}

\leaveout{

\begin{figure}[H]
\centering 
\begin{tabular}{cccc}
\includegraphics[scale=1.00]{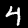} &
\includegraphics[scale=1.00]{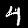} &
\includegraphics[scale=1.00]{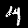} &
\includegraphics[scale=1.00]{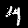} \\
\includegraphics[scale=1.00]{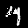} &
\includegraphics[scale=1.00]{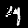} &
\includegraphics[scale=1.00]{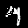} &
\includegraphics[scale=1.00]{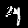} \\
\end{tabular}
\caption{Modification process of attacker for Logistic regression based by decreasing the cost sensitivity parameter $\lambda$.}
\label{fig:visual_logistic}
\end{figure}

\begin{figure}[H]
\centering 
\begin{tabular}{cccc}
\includegraphics[scale=1.00]{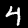} &
\includegraphics[scale=1.00]{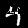} &
\includegraphics[scale=1.00]{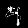} &
\includegraphics[scale=1.00]{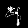} \\
\includegraphics[scale=1.00]{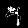} &
\includegraphics[scale=1.00]{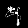} &
\includegraphics[scale=1.00]{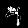} &
\includegraphics[scale=1.00]{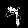} \\
\end{tabular}
\caption{Modification process of attacker for SVM based by decreasing the cost sensitivity parameter $\lambda$.}
\label{fig:visual_svm}
\end{figure}

\begin{figure}[H]
\centering 
\begin{tabular}{cccc}
\includegraphics[scale=1.00]{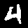} &
\includegraphics[scale=1.00]{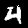} &
\includegraphics[scale=1.00]{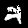} &
\includegraphics[scale=1.00]{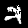} \\
\includegraphics[scale=1.00]{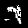} &
\includegraphics[scale=1.00]{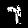} &
\includegraphics[scale=1.00]{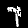} &
\includegraphics[scale=1.00]{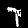} \\
\end{tabular}
\caption{Modification process of attacker for Neural Network with 1 hidden layer based by decreasing the cost sensitivity parameter $\lambda$.}
\label{fig:visual_nn1}
\end{figure}

\begin{figure}[H]
\centering 
\begin{tabular}{cccc}
\includegraphics[scale=1.00]{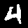} &
\includegraphics[scale=1.00]{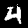} &
\includegraphics[scale=1.00]{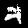} &
\includegraphics[scale=1.00]{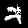} \\
\includegraphics[scale=1.00]{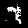} &
\includegraphics[scale=1.00]{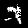} &
\includegraphics[scale=1.00]{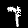} &
\includegraphics[scale=1.00]{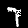} \\
\end{tabular}
\caption{Modification process of attacker for Neural Network with 3 hidden layers based by decreasing the cost sensitivity parameter $\lambda$.}
\label{fig:visual_nn3}
\end{figure}
}

\leaveout{
% nb results for enron based on model 2
\begin{figure}[H]
\centering 
\begin{tabular}{cc}
\includegraphics[scale=0.19]{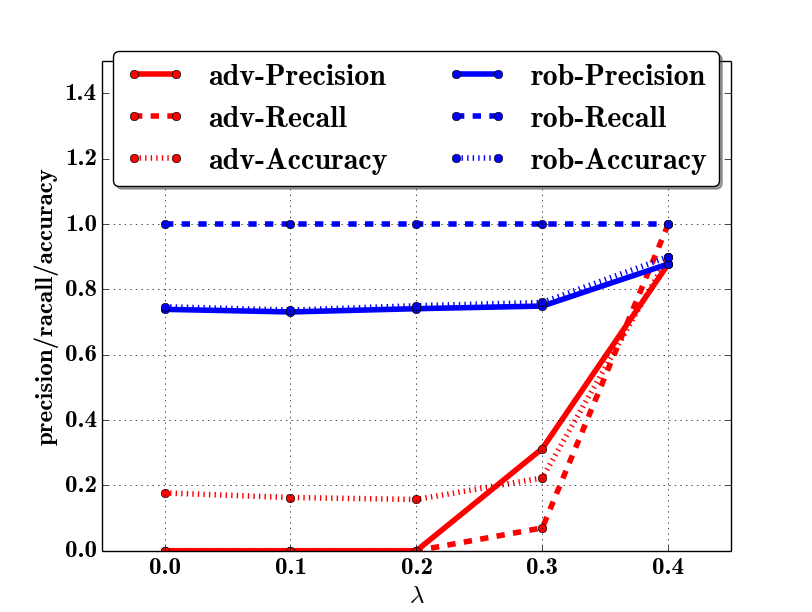} &
\includegraphics[scale=0.19]{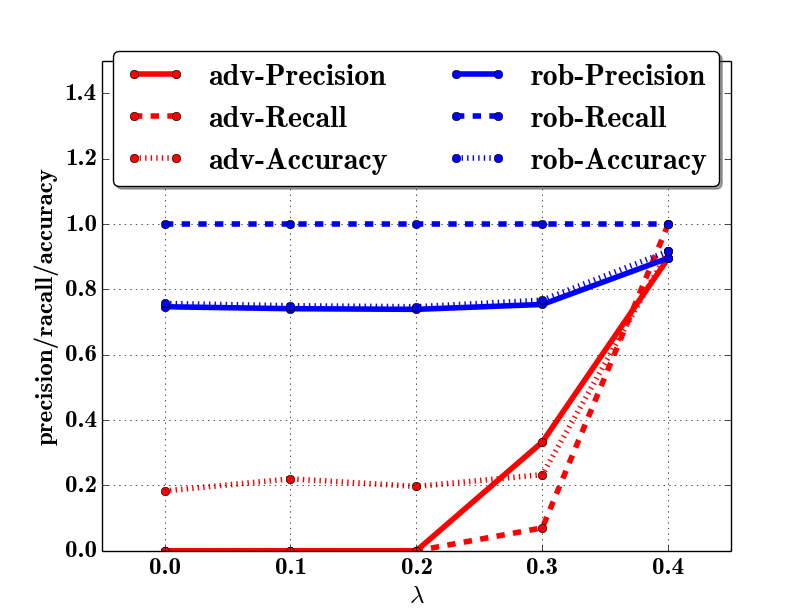} \\
(a) & (b)
\end{tabular}
\caption{Naive Bayesian performance based on different adversarial cost sensitivity $\lambda$ for Enron dataset with different number of binary features as (a) 1000, (b) 2000.}
\label{fig:nb_enron_binary}
\end{figure}

% logistic results for enron based on model 2
\begin{figure}[H]
\centering 
\begin{tabular}{cc}
\includegraphics[scale=0.19]{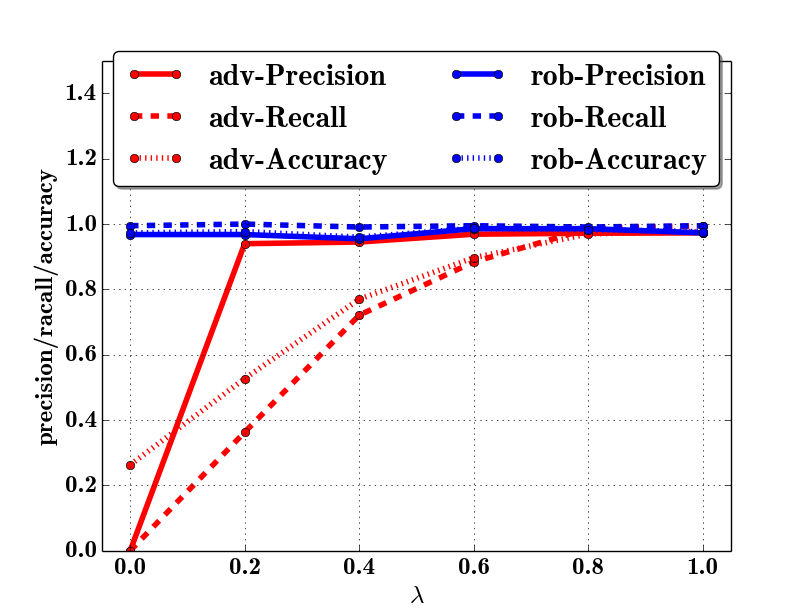} &
\includegraphics[scale=0.19]{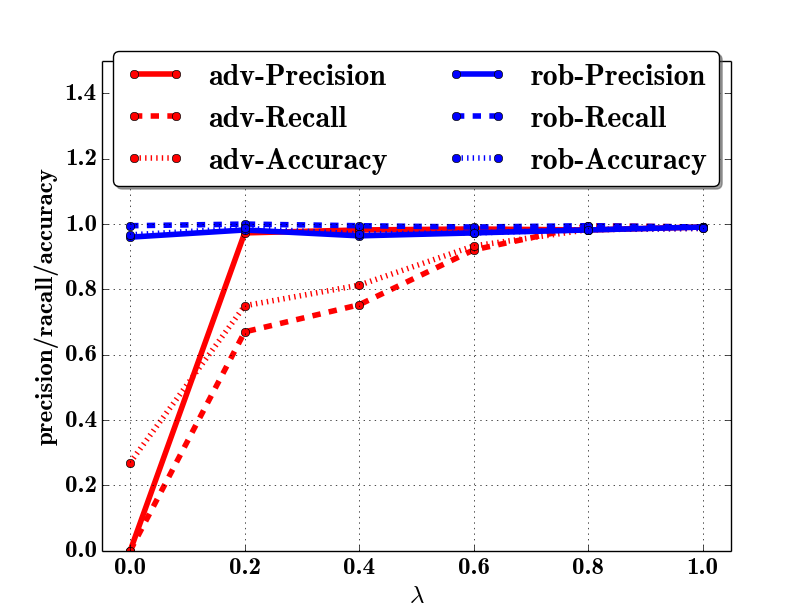} \\
(a) & (b)
\end{tabular}
\caption{Logistic regression performance based on different adversarial cost sensitivity $\lambda$ for Enron dataset with different number of binary features as (a) 1000, (b) 2000.}
\label{fig:log_enron_binary}
\end{figure}

% svm linear results for enron based on model 2
\begin{figure}[H]
\centering 
\begin{tabular}{cc}
\includegraphics[scale=0.19]{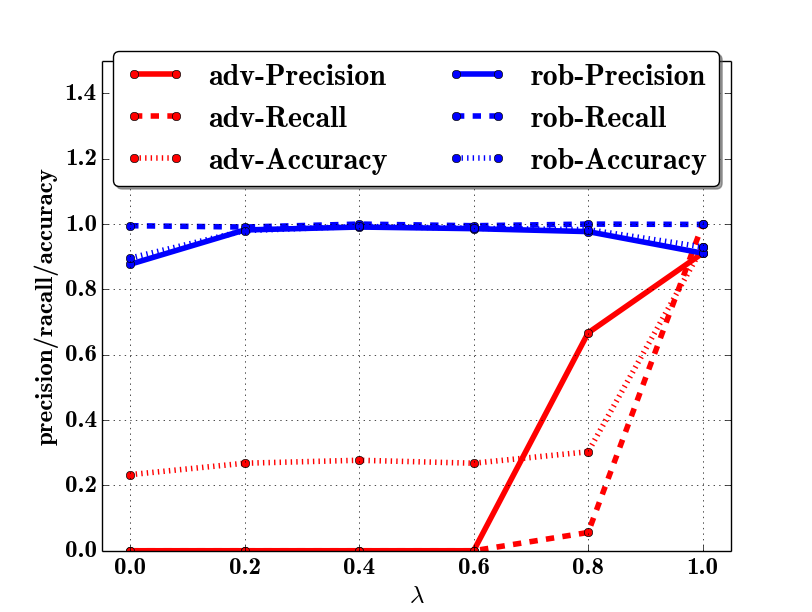} &
\includegraphics[scale=0.19]{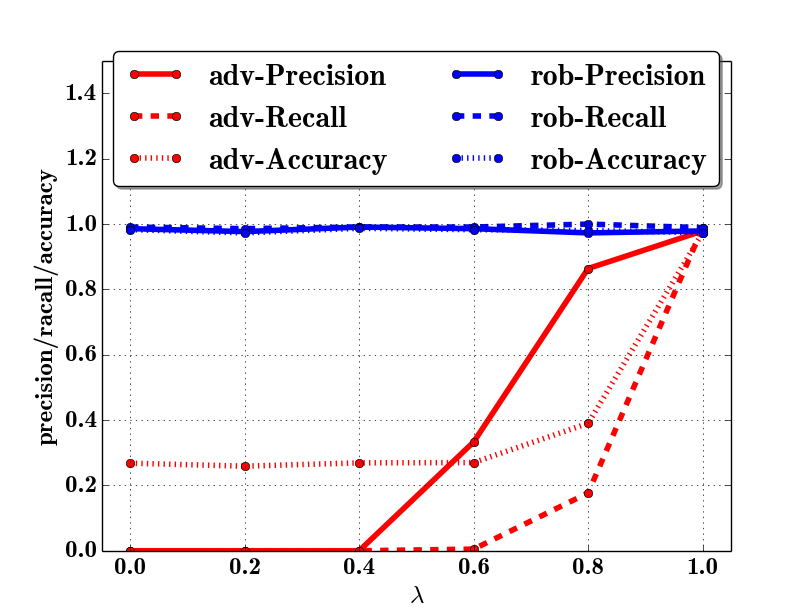} \\
(a) & (b)
\end{tabular}
\caption{SVM linear performance based on different adversarial cost sensitivity $\lambda$ for Enron dataset with different number of binary features as (a) 1000, (b) 2000.}
\label{fig:svm_enron_binary}
\end{figure}

% nn-1 results for enron based on model 2
\begin{figure}[H]
\centering 
\begin{tabular}{cc}
\includegraphics[scale=0.19]{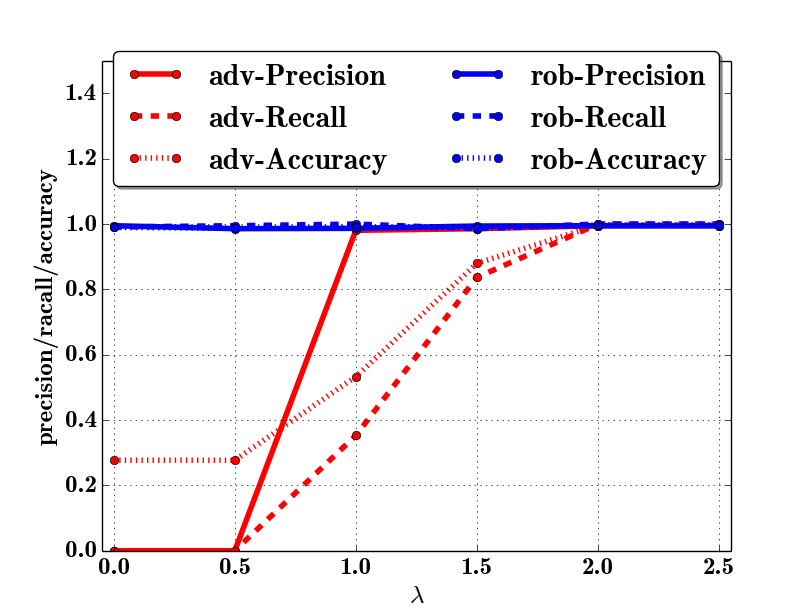} &
\includegraphics[scale=0.19]{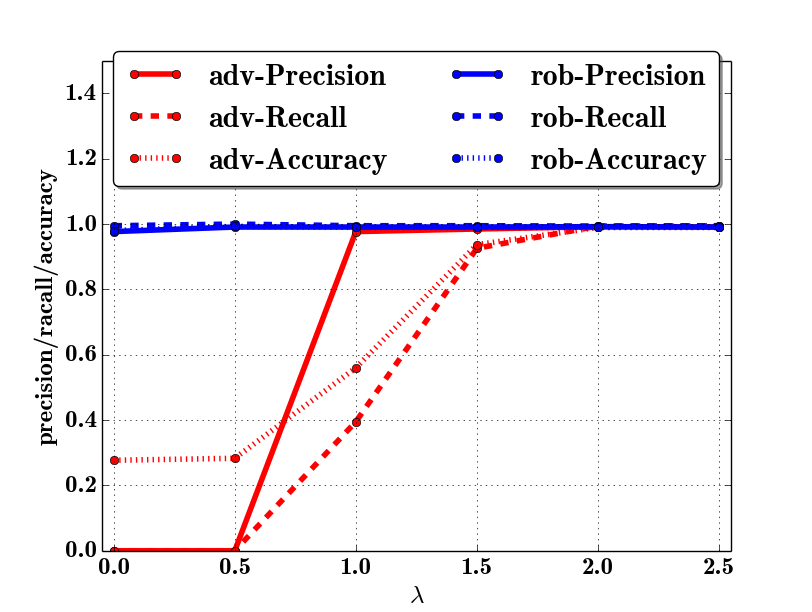} \\
(a) & (b)
\end{tabular}
\caption{Neural network with 1 hidden layer performance based on different adversarial cost sensitivity $\lambda$ for Enron dataset with different number of binary features as (a) 1000, (b) 2000.}
\label{fig:nn1_enron_binary}
\end{figure}

% nn-3 results for enron based on model 2
\begin{figure}[H]
\centering 
\begin{tabular}{cc}
\includegraphics[scale=0.19]{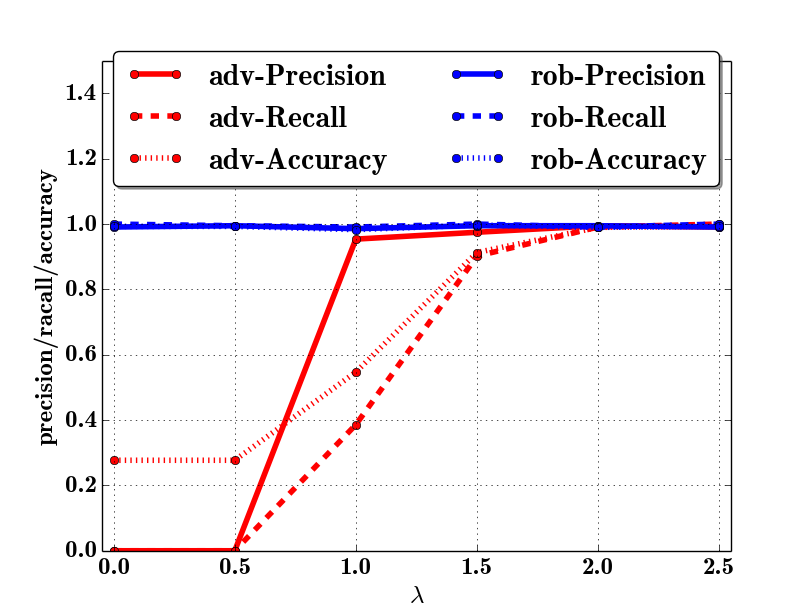} &
\includegraphics[scale=0.19]{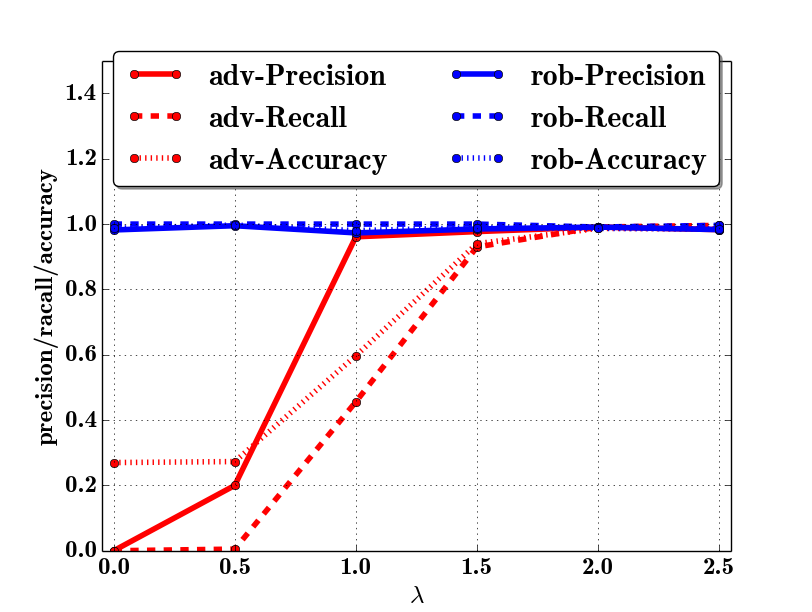} \\
(a) & (b)
\end{tabular}
\caption{Neural network with 3 hidden layers performance based on different adversarial cost sensitivity $\lambda$ for Enron dataset with different number of binary features as (a) 1000, (b) 2000.}
\label{fig:nn3_enron_binary}
\end{figure}
}

\leaveout{
% linear svm results for mnist based on multiclass
\begin{figure}[H]
\centering 
\begin{tabular}{cc}
\includegraphics[scale=0.19]{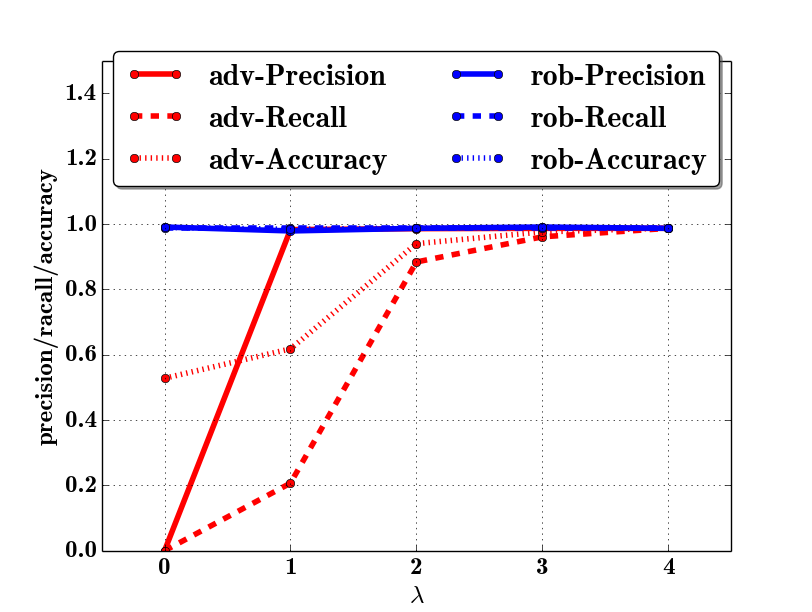} &
\includegraphics[scale=0.19]{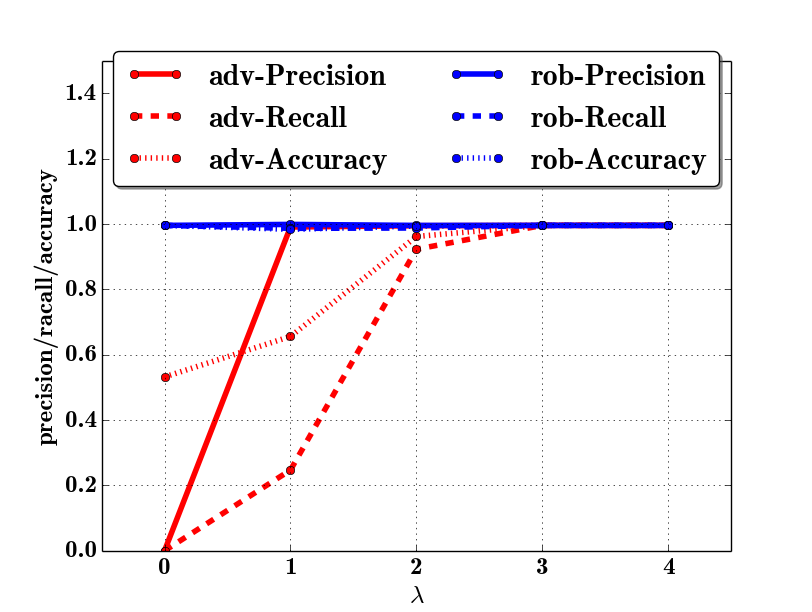} \\
(a) & (b)
\end{tabular}
\caption{SVM with linear kernel performance based on different adversarial cost sensitivity $\lambda$ with MNIST dataset for multi-class classification when the number of feature is (a) 627, (b) 784.}
\label{fig:multi_svm}
\end{figure}

% nn1 results for mnist based on multiclass
\begin{figure}[H]
\centering 
\begin{tabular}{cc}
\includegraphics[scale=0.19]{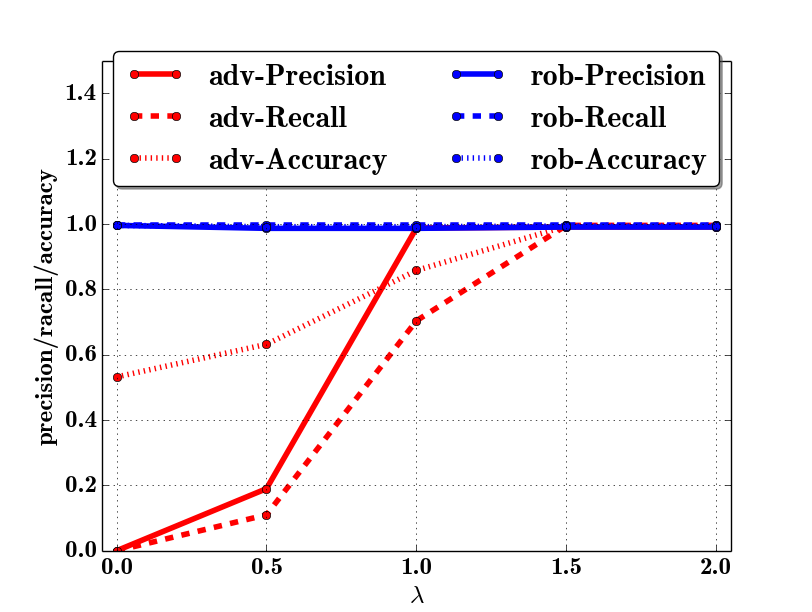} &
\includegraphics[scale=0.19]{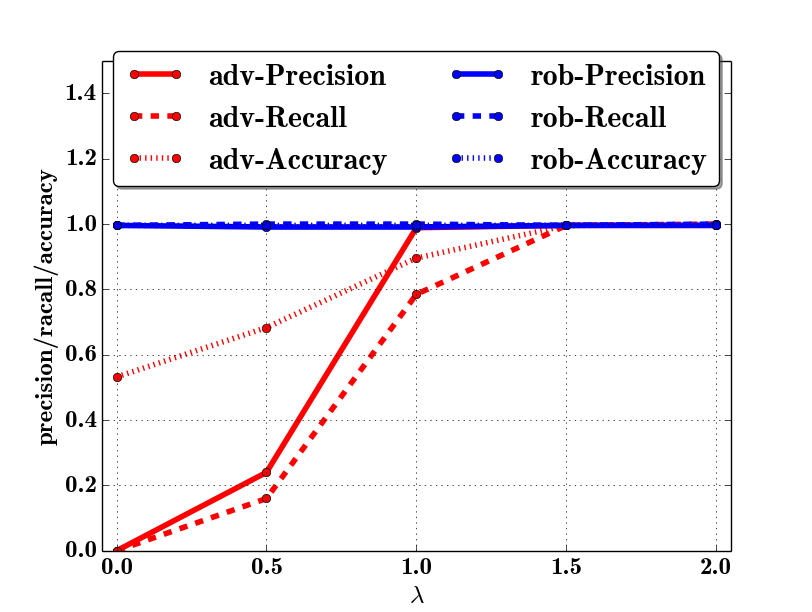} \\
(a) & (b)
\end{tabular}
\caption{Neural network with 1 hidden layers performance based on different adversarial cost sensitivity $\lambda$ with MNIST dataset for multi-class classification when the number of feature is (a) 627, (b) 784.}
\label{fig:multi_nn1}
\end{figure}

% nn3 results for mnist based on multiclass
\begin{figure}[H]
\centering 
\begin{tabular}{cc}
\includegraphics[scale=0.19]{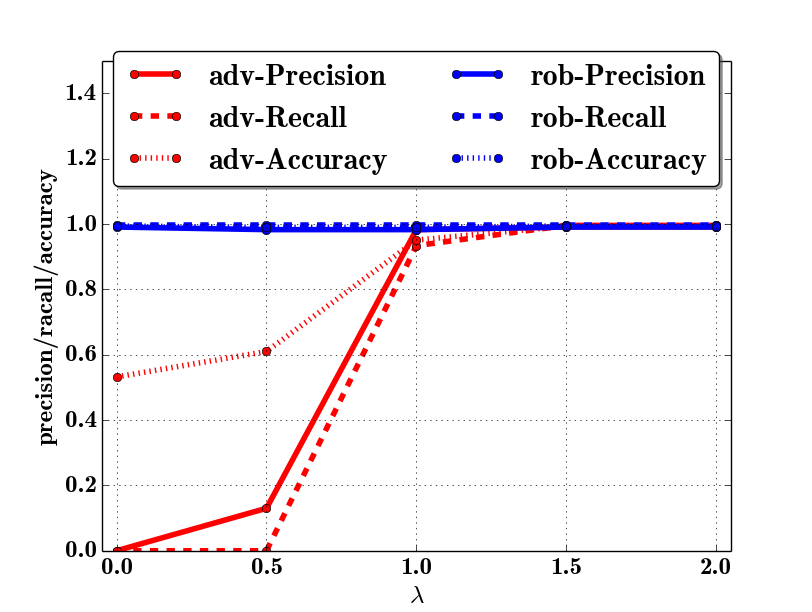} &
\includegraphics[scale=0.19]{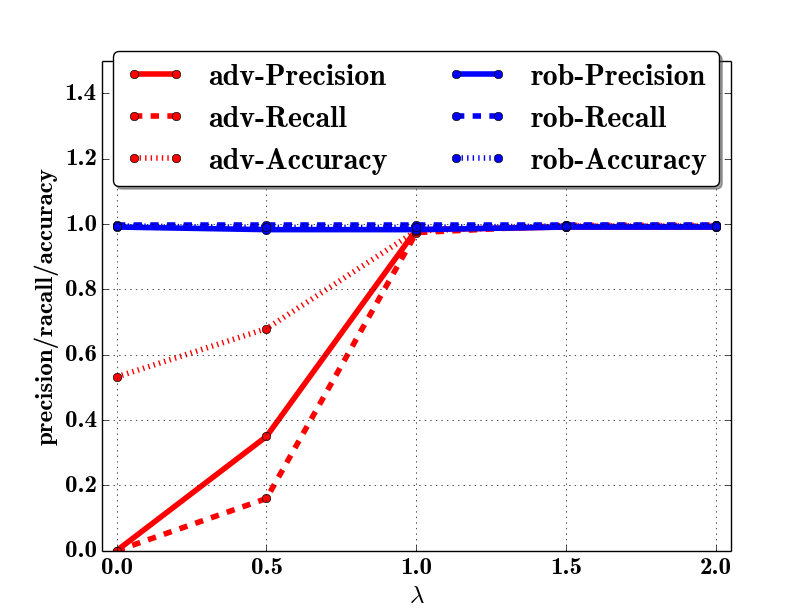} \\
(a) & (b)
\end{tabular}
\caption{Neural network with 3 hidden layers performance based on different adversarial cost sensitivity $\lambda$ with MNIST dataset for multi-class classification when the number of feature is (a) 627, (b) 784.}
\label{fig:multi_nn3}
\end{figure}
}

\leaveout{
% Ridge results for enron
\begin{figure}[H]
\centering 
\begin{tabular}{cc}
\includegraphics[scale=0.19]{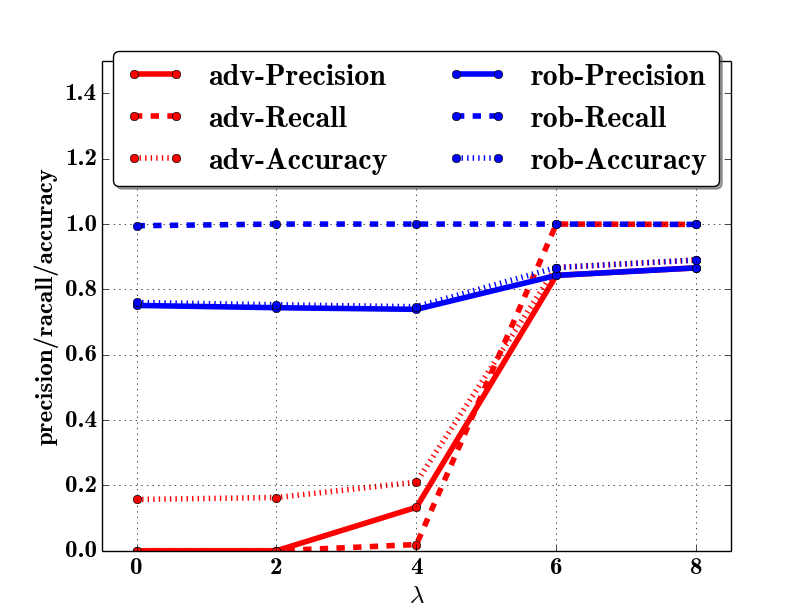} &
\includegraphics[scale=0.19]{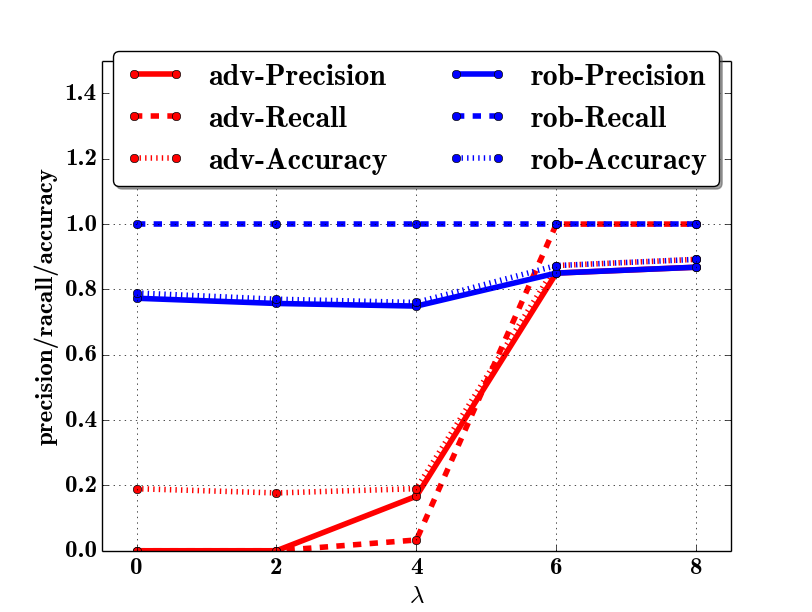} \\
(a) & (b)
\end{tabular}
\caption{Ridge regression performance based on different adversarial cost sensitivity $\lambda$ for Enron dataset with different number of features as (a) 1000, (b) 2000.}
\label{fig:ridge_enron}
\end{figure}

% Ridge results for mnist

\begin{figure}[H]
\centering 
\begin{tabular}{cc}
\includegraphics[scale=0.19]{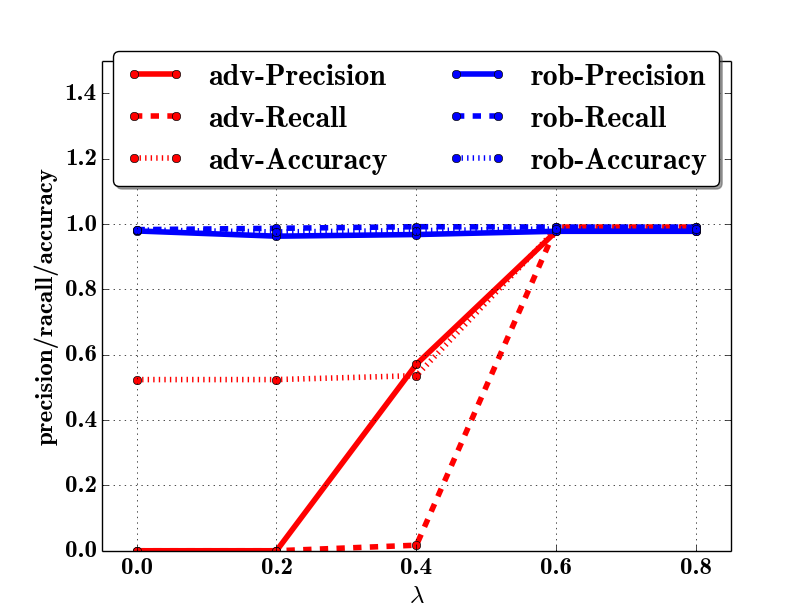} &
\includegraphics[scale=0.19]{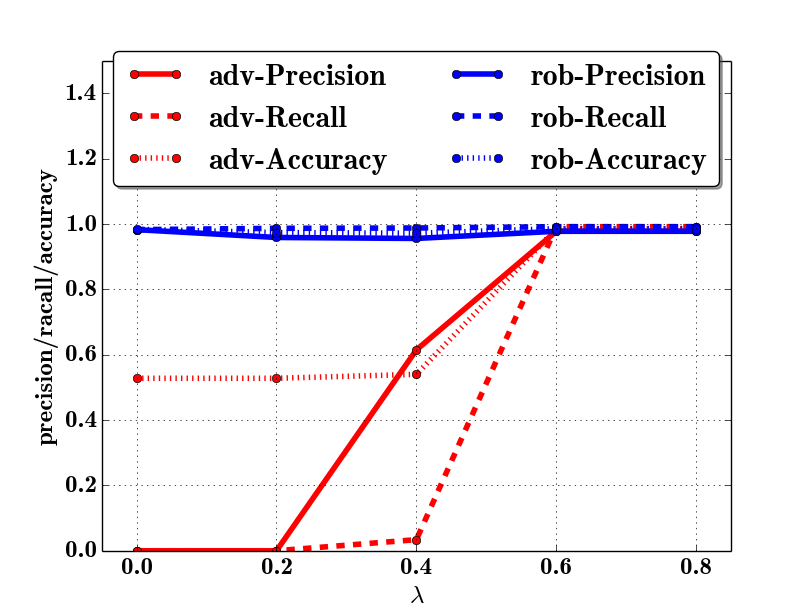} \\
(a) & (b)
\end{tabular}
\caption{Ridge regression performance based on different adversarial cost sensitivity $\lambda$ for MNIST dataset with different number of features as (a) 627, (b) 784.}
\label{fig:ridge_mnist}
\end{figure}

\begin{figure}[H]
\centering 
\begin{tabular}{cccc}
\includegraphics[scale=1.00]{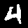} &
\includegraphics[scale=1.00]{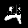} &
\includegraphics[scale=1.00]{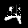} &
\includegraphics[scale=1.00]{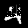} \\
\includegraphics[scale=1.00]{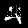} &
\includegraphics[scale=1.00]{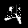} &
\includegraphics[scale=1.00]{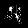} &
\includegraphics[scale=1.00]{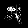} \\
\end{tabular}
\caption{Modification process of attacker for Ridge regression based by decreasing the cost sensitivity parameter $\lambda$.}
\label{fig:visual_ridge}
\end{figure}

% Bayesian results for enron
\begin{figure}[H]
\centering 
\begin{tabular}{cc}
\includegraphics[scale=0.19]{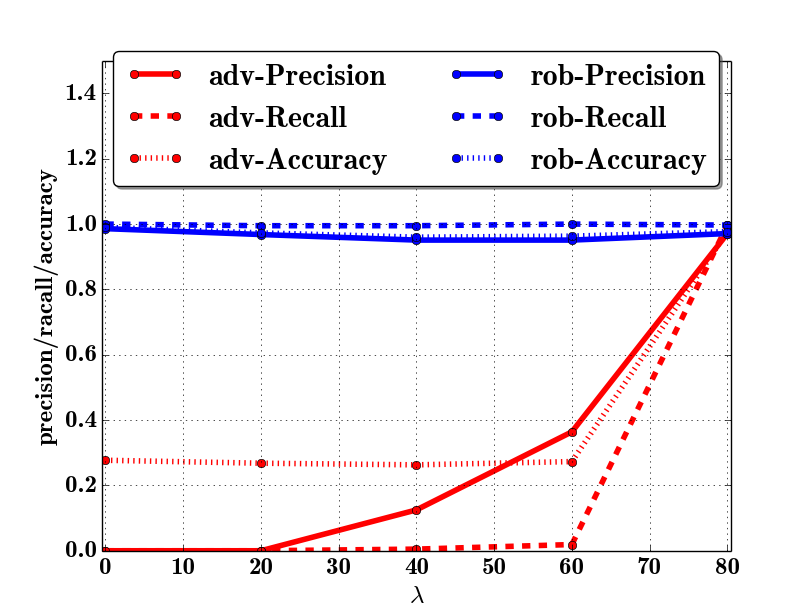} &
\includegraphics[scale=0.19]{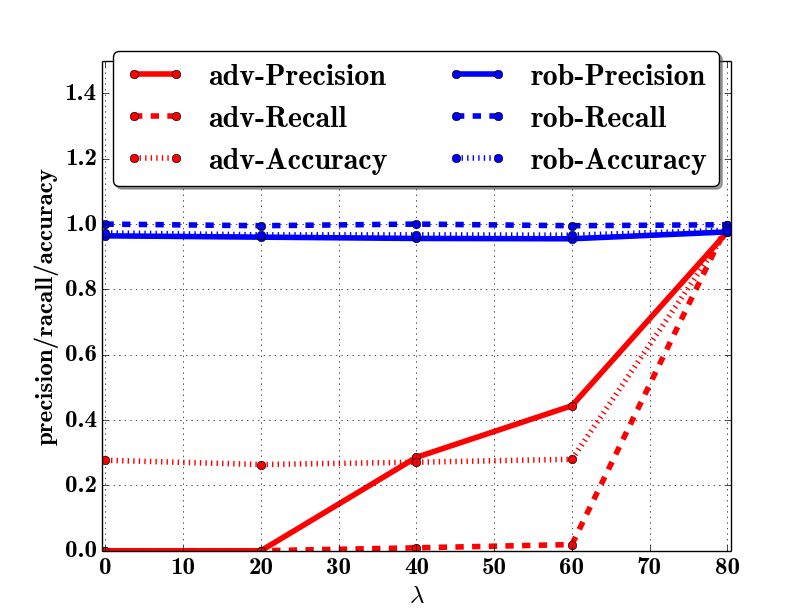} \\
(a) & (b)
\end{tabular}
\caption{Bayesian regression performance based on different adversarial cost sensitivity $\lambda$ for Enron dataset with different number of features as (a) 1000, (b) 2000.}
\label{fig:bayesian_enron}
\end{figure}

% Bayesian results for mnist

\begin{figure}[H]
\centering 
\begin{tabular}{cc}
\includegraphics[scale=0.19]{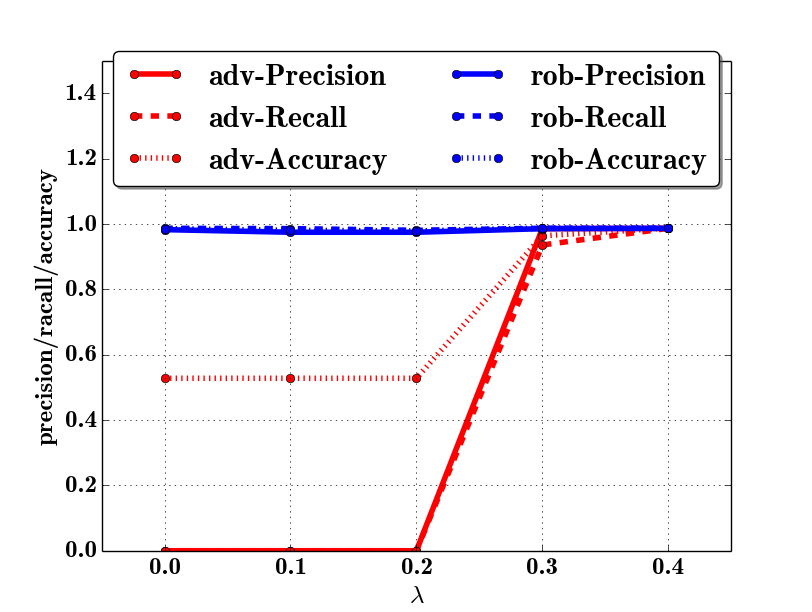} &
\includegraphics[scale=0.19]{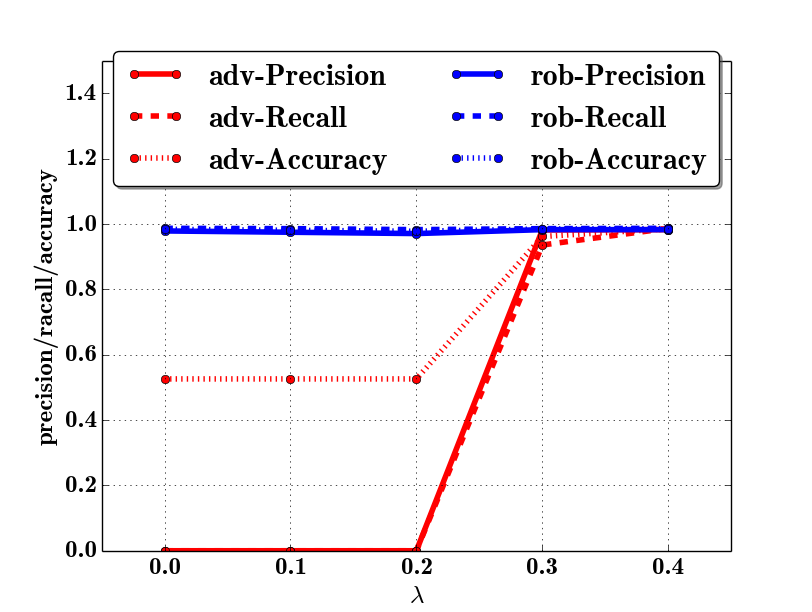} \\
(a) & (b)
\end{tabular}
\caption{Bayesian regression performance based on different adversarial cost sensitivity $\lambda$ for MNIST dataset with different number of features as (a) 627, (b) 784.}
\label{fig:bayesian_mnist}
\end{figure}

\begin{figure}[H]
\centering 
\begin{tabular}{cccc}
\includegraphics[scale=1.00]{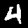} &
\includegraphics[scale=1.00]{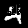} &
\includegraphics[scale=1.00]{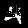} &
\includegraphics[scale=1.00]{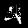} \\
\includegraphics[scale=1.00]{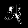} &
\includegraphics[scale=1.00]{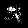} &
\includegraphics[scale=1.00]{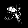} &
\includegraphics[scale=1.00]{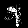} \\
\end{tabular}
\caption{Modification process of attacker for Bayesian regression based by decreasing the cost sensitivity parameter $\lambda$.}
\label{fig:visual_bayesian}
\end{figure}
}

\section{Conclusion}

We proposed a general-purpose systematic retraining algorithm against evasion attacks of classifiers for arbitrary oracle-based evasion models.
We first demonstrated that this algorithm effectively minimizes an upper bound on optimal adversarial risk, which is typically extremely difficult to compute (indeed, no approach exists for minimizing adversarial loss for an arbitrary evasion oracle).
Experimentally, we showed that the performance of our retraining approach is nearly indistinguishable from optimal, whereas scalability is dramatically improved: indeed, with \emph{RAD}, we are able to easily scale the approach to thousands of features, whereas a state-of-the-art adversarial risk optimization method can only scale to 15-30 features.
We generalize our results to show that a probabilistic upper bound on minimal adversarial loss can be obtained even when the oracle is computed approximately by leveraging random restarts, and an empirical evaluation which confirms that the resulting bound relaxation is tight in practice.

%for both differentiable discriminant functions and the discrete models. We offered a upper bound for the learner's loss for convex differentiable learning models. We also empirically proved the convergence of the algorithm, therefore still guarantee the upper bound for the discrete feature space and non-convex models, where the global optimal best response of attacker is not guaranteed. 
%We formulated the attacker model considering both the evasion purpose as well as the modification cost. We empirically demonstrated to what extend the adversary can evade different classifiers based on their different preferred cost sensitivity. 
%We applied a highly scalable coordinate descent algorithm for attacker to search for the optimal evasion strategy. It has been shown that various popular classification algorithms, such as SVMs and neural networks, can still be evaded efficiently with great performance degrade.

We also offer a general-purpose framework for optimization-based oracles using variations of coordinate greedy algorithm on both discrete and continuous feature spaces.
Our experiments demonstrate that our adversarial oracle approach is extremely effective in corrupting the baseline learning algorithms.
On the other hand, extensive experiments also show that the use of our retraining methods significantly boosts robustness of algorithms to evasion.
Indeed, retrained algorithms become nearly insensitive to adversarial evasion attacks, at the same time maintaining extremely good learning performance on data overall.
Perhaps the most significant strength of the proposed approach is that it can make use of arbitrary learning algorithms essentially ``out-of-the-box'', and effectively and quickly boost their robustness, in contrast to most prior adversarial learning methods which were algorithm-specific.

%To extend this work, it is necessary to consider different attacker cost models and learn their objectives from the collected real world data, such as spam or malware \cite{ke2016behavioral}. This way the attacker's behavior can be simulated more accurately and 
%obtain tighter bound for the learner. Alternatively, the learner can also combine the retraining approach as proposed with ensemble techniques, such as bagging or random forests to against the deliberated evasion attacks. Furthermore, with more powerful attackers who can have access to part of the training data, poisoning attacks can also be conducted and cause larger cost for the learning systems. Therefore, how to identify the poisoning training data with the normal ones and retraining the leaner is also an important topic to pursue.

%\subsubsection*{Acknowledgements}
%This work was partially supported by Sandia National Laboratories contract 2244, and Army Research Office grant number W911NF-16-1-0069.

%\newpage

% \subsubsection*{References}
\small
\bibliography{nips_2016_retraining}

\begin{thebibliography}{10}

\bibitem{fawcett1997adaptive}
Tom Fawcett and Foster Provost.
\newblock Adaptive fraud detection.
\newblock {\em Data mining and knowledge discovery}, 1(3):291--316, 1997.

\bibitem{mahoney2002learning}
Matthew~V Mahoney and Philip~K Chan.
\newblock Learning nonstationary models of normal network traffic for detecting
  novel attacks.
\newblock In {\em Proceedings of the eighth ACM SIGKDD international conference
  on Knowledge discovery and data mining}, pages 376--385. ACM, 2002.

\bibitem{fogla2006polymorphic}
Prahlad Fogla, Monirul~I Sharif, Roberto Perdisci, Oleg~M Kolesnikov, and Wenke
  Lee.
\newblock Polymorphic blending attacks.
\newblock In {\em USENIX Security}, 2006.

\bibitem{huang2011adversarial}
Ling Huang, Anthony~D Joseph, Blaine Nelson, Benjamin~IP Rubinstein, and
  JD~Tygar.
\newblock Adversarial machine learning.
\newblock In {\em Workshop on Security and Artificial Intelligence}, pages
  43--58. ACM, 2011.

\bibitem{kolcz2009feature}
Aleksander Ko{\l}cz and Choon~Hui Teo.
\newblock Feature weighting for improved classifier robustness.
\newblock In {\em CEAS’09: sixth conference on email and anti-spam}, 2009.

\bibitem{lowd2005adversarial}
Daniel Lowd and Christopher Meek.
\newblock Adversarial learning.
\newblock In {\em Proceedings of the eleventh ACM SIGKDD international
  conference on Knowledge discovery in data mining}, pages 641--647. ACM, 2005.

\bibitem{karlberger2007exploiting}
Christoph Karlberger, G{\"u}nther Bayler, Christopher Kruegel, and Engin Kirda.
\newblock Exploiting redundancy in natural language to penetrate bayesian spam
  filters.
\newblock {\em WOOT}, 7:1--7, 2007.

\bibitem{nelson2012query}
Blaine Nelson, Benjamin~IP Rubinstein, Ling Huang, Anthony~D Joseph, Steven~J
  Lee, Satish Rao, and JD~Tygar.
\newblock Query strategies for evading convex-inducing classifiers.
\newblock {\em The Journal of Machine Learning Research}, 13(1):1293--1332,
  2012.

\bibitem{el2003robust}
Laurent El~Ghaoui, Gert Ren{\'e}~Georges Lanckriet, Georges Natsoulis, et~al.
\newblock {\em Robust classification with interval data}.
\newblock Computer Science Division, University of California, 2003.

\bibitem{li2014feature}
Bo~Li and Yevgeniy Vorobeychik.
\newblock Feature cross-substitution in adversarial classification.
\newblock In {\em Advances in Neural Information Processing Systems}, pages
  2087--2095, 2014.

\bibitem{li2015scalable}
Bo~Li and Yevgeniy Vorobeychik.
\newblock Scalable optimization of randomized operational decisions in
  adversarial classification settings.
\newblock In {\em Proceedings of the Eighteenth International Conference on
  Artificial Intelligence and Statistics}, pages 599--607, 2015.

\bibitem{teo2007convex}
Choon~H Teo, Amir Globerson, Sam~T Roweis, and Alex~J Smola.
\newblock Convex learning with invariances.
\newblock In {\em Advances in neural information processing systems}, pages
  1489--1496, 2007.

\bibitem{globerson2006nightmare}
Amir Globerson and Sam Roweis.
\newblock Nightmare at test time: robust learning by feature deletion.
\newblock In {\em Proceedings of the 23rd international conference on Machine
  learning}, pages 353--360. ACM, 2006.

\bibitem{bruckner2011stackelberg}
Michael Br{\"u}ckner and Tobias Scheffer.
\newblock Stackelberg games for adversarial prediction problems.
\newblock In {\em Proceedings of the 17th ACM SIGKDD international conference
  on Knowledge discovery and data mining}, pages 547--555. ACM, 2011.

\bibitem{biggio2014security}
Battista Biggio, Giorgio Fumera, and Fabio Roli.
\newblock Security evaluation of pattern classifiers under attack.
\newblock {\em Knowledge and Data Engineering, IEEE Transactions on},
  26(4):984--996, 2014.

\bibitem{goodfellow2014explaining}
Ian~J Goodfellow, Jonathon Shlens, and Christian Szegedy.
\newblock Explaining and harnessing adversarial examples.
\newblock {\em arXiv preprint arXiv:1412.6572}, 2014.

\bibitem{Kantchelian15}
A.~Kantchelian, J.~D. Tygar, and A.~D. Joseph.
\newblock Evasion and hardening of tree ensemble classifiers.
\newblock arXiv pre-print, 2015.

\bibitem{klimt2004enron}
Bryan Klimt and Yiming Yang.
\newblock The enron corpus: A new dataset for email classification research.
\newblock In {\em Machine learning: ECML 2004}, pages 217--226. Springer, 2004.

\bibitem{lecun2010mnist}
Yann LeCun and Corinna Cortes.
\newblock Mnist handwritten digit database.
\newblock {\em AT\&T Labs [Online]. Available: http://yann. lecun.
  com/exdb/mnist}, 2010.

\bibitem{ke2016behavioral}
Liyiming Ke, Bo~Li, and Yevgeniy Vorobeychik.
\newblock Behavioral experiments in email filter evasion.
\newblock In {\em AAAI Conference on Artificial Intelligence}, 2016.

\bibitem{dalvi2004adversarial}
Nilesh Dalvi, Pedro Domingos, Sumit Sanghai, Deepak Verma, et~al.
\newblock Adversarial classification.
\newblock In {\em Proceedings of the tenth ACM SIGKDD international conference
  on Knowledge discovery and data mining}, pages 99--108. ACM, 2004.

\bibitem{luo1992convergence}
Zhi-Quan Luo and Paul Tseng.
\newblock On the convergence of the coordinate descent method for convex
  differentiable minimization.
\newblock {\em Journal of Optimization Theory and Applications}, 72(1):7--35,
  1992.

\bibitem{Boyd04}
Stephen Boyd and Lieven Vandenberghe.
\newblock {\em Convex Optimization}.
\newblock Cambridge University Press, 2004.

\end{thebibliography}
\bibliographystyle{unsrt}

\newpage
\normalsize
\section*{Supplement To: A General Retraining Framework for Scalable Adversarial Classification}
\subsection*{Proof of Lemma 4.1}

If $f(x^*) < 0$, then $\min\{0,f(x^*)\} + c(x^*,x_i) = f(x^*) +
c(x^*,x_i)$.
By optimality of $\bar{x}$, $f(\bar{x}) + c(\bar{x},x_i) \le f(x^*) +
c(x^*,x_i)$.
Since $x_i$ is suboptimal in Problem~(4) and $c$ strictly
positive in all other cases, $f(x^*) +
c(x^*,x_i) < \min\{0,f(x_i)\} + c(x_i,x_i) =0$.
By optimality of $x^*$, $f(x^*) + c(x^*,x_i) \le \min\{0,f(\bar{x})\}
+ c(\bar{x},x_i) \le f(\bar{x}) + c(\bar{x}, x_i)$, which implies that
$f(\bar{x}) + c(\bar{x},x_i) = f(x^*) + c(x^*,x_i)$.
Consequently, $f(\bar{x}) + c(\bar{x},x_i) < 0$, and, therefore,
$f(\bar{x}) < 0$.

\subsection*{Proof of Proposition 4.2}

Let $\bar{\beta} \in \arg\min_{\beta}  \mathcal{L}^R_N(\beta,
\mathcal{O}_L)$.
Consequently, for any $\beta$,
\begin{align*}
\mathcal{L}_{A,01}^*(\mathcal{O}_L) &= \min_\beta \mathcal{L}_{A,01}(\beta;
\mathcal{O}_L) \\
&\le \sum_{i: y_i = -1} l_{01}(g_{\bar{\beta}}(x_i),-1) +\sum_{i:y_i = +1}
l_{01}(g_{\bar{\beta}} (\mathcal{O}(\bar{\beta},x_i)),+1) + \alpha ||\bar{\beta}||_p^p.
\end{align*}
Now,
\begin{align*}
\sum_{i:y_i = +1} &l_{01}(g_{\bar{\beta}}
(\mathcal{O}(\bar{\beta},x_i)),+1) \le \sum_{i:y_i = +1} l_{01}(g_{\bar{\beta}}
(\mathcal{O}_L(\bar{\beta},x_i)),+1) + \delta(p)
\end{align*}
with probability at least $1 - p$,
% where $\delta(p) = (p_L + (-\frac{1}{2B} \log p)^{1/2})$,
where $\delta(p) = B p_L + \frac{\sqrt{\log^2 p-8 B p_l \log p}-\log p}{2}$,
by the Chernoff bound, and Lemma~4.1, which assures that an
optimal solution to Problem~(5) can only over-estimate mistakes.
Moreover,
\begin{align*}
\sum_{i:y_i = +1} &l_{01}(g_{\bar{\beta}}
(\mathcal{O}_L(\bar{\beta},x_i)),+1) \le\sum_{i:y_i = +1}\sum_{j \in N_i} l(g_{\bar{\beta}}
(\mathcal{O}_L(\bar{\beta},x_i)),+1),
\end{align*}
since $\mathcal{O}_L(\bar{\beta},x_i) \in N_i$ for all $i$ by
construction, and $l$ is an upper bound on $l_{01}$.
Putting everything together, we get the desired result.

\subsection*{Convergence of $p_L$ with Increasing Number of Restarts $L$}

\begin{figure}[H]
\centering 
\begin{tabular}{cc}
\includegraphics[scale=0.23]{images/binary_start.png} &
\includegraphics[scale=0.23]{images/continuous_start.png} \\
(a) & (b)
\end{tabular}
\caption{The convergence of $p_L$ based on different number of starting points for (a) Binary, (b) Continuous feature space.}
\label{fig:start}
\end{figure}

\subsection*{Attacks as Constrained Optimization}

A variation on the attack models in the main paper is when the attacker is solving
the following constrained optimization problem:
\begin{subequations}
\label{E:constrainedAttacks}
\begin{align}
&\min_x \min\{0,f(x)\}\\
&\mathrm{s.t.:} \quad c(x,x_i) \le B
\end{align}
\end{subequations}
for some cost budget constraint $B$ and query budget constraint $Q$.
While this problem is, again, non-convex, we can instead minimize the
convex upper bound, $f(x)$, as before, if we assume that $f(x)$ is
convex.
In this case, if the feature space is continuous, the problem can be
solved optimally using standard convex optimization
methods~\cite{Boyd04}.
If the feature space is binary and $f(x)$ is linear or
convex-inducing, algorithms proposed by Lowd and
Meek~\cite{lowd2005adversarial} and Nelson et
al.~\cite{nelson2012query}. Figure~\ref{fig:constraint_nb}, \ref{fig:constraint_svm} and~\ref{fig:constraint_3nn} show the performance of \emph{RAD} based on the optimized adversarial strategies for various learning models, respectively. 

\begin{figure}[tbfh]
\centering 
\begin{tabular}{cc}
\includegraphics[scale=0.23]{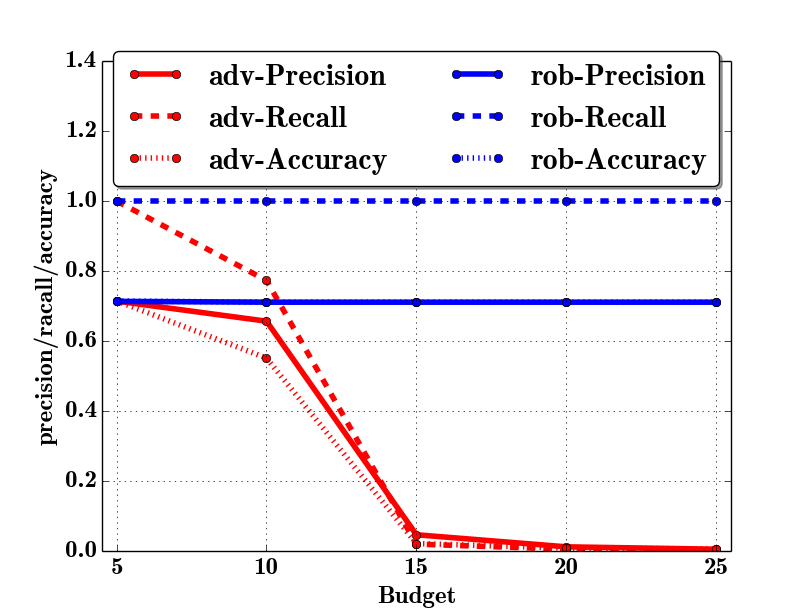} &
\includegraphics[scale=0.23]{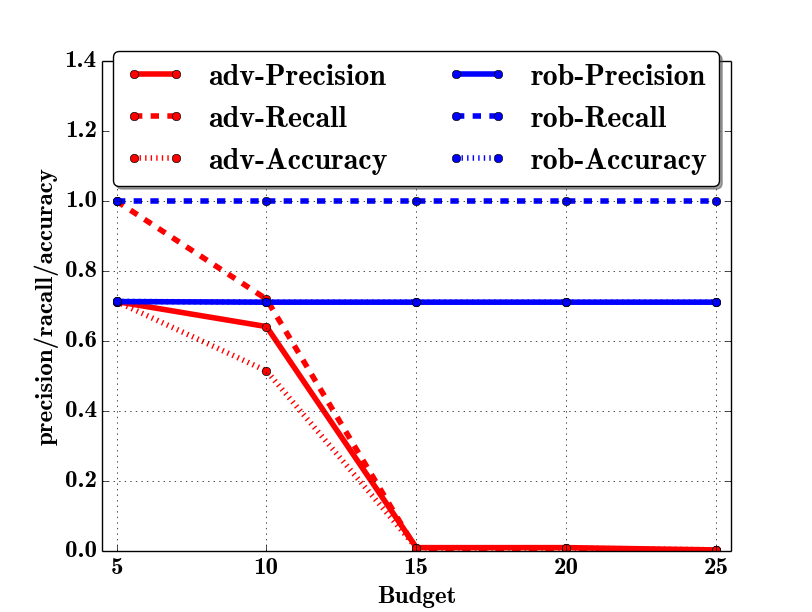} \\
(a) & (b)
\end{tabular}
\caption{Performance of baseline (\emph{adv-}) and \emph{RAD} (\emph{rob-}) as a function of adversarial budget for Enron dataset with binary features testing on adversarial instances using Naive Bayesian. (a) query budget $Q=30$, (b) query budget $Q=40$.}
\label{fig:constraint_nb}
\end{figure}

\begin{figure}[tbfh]
\centering 
\begin{tabular}{cc}
\includegraphics[scale=0.23]{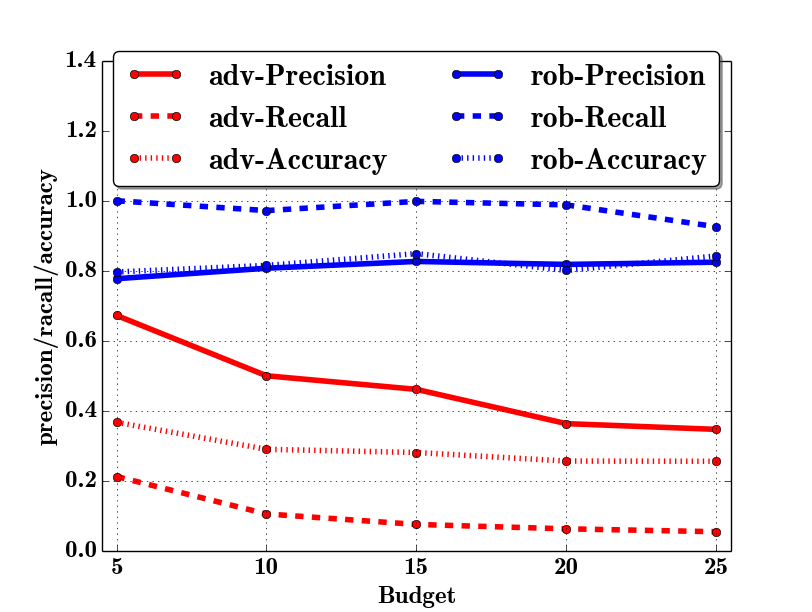} &
\includegraphics[scale=0.23]{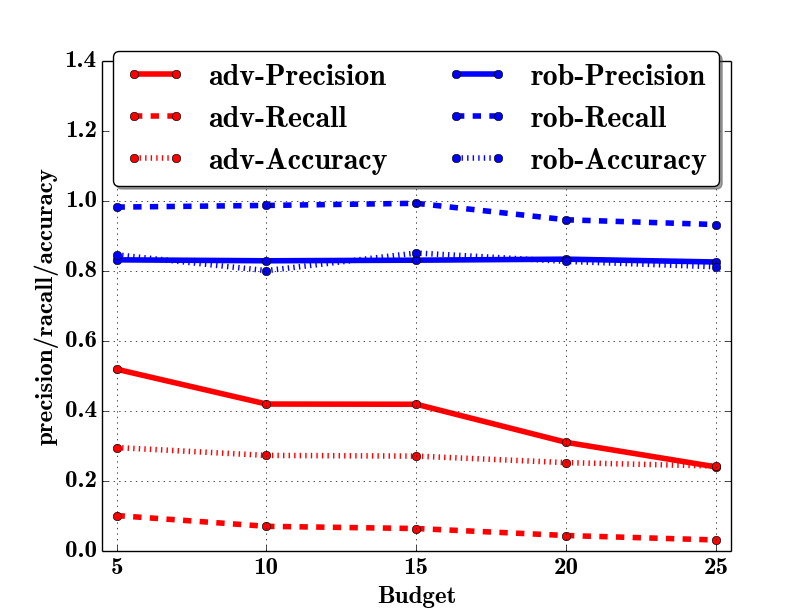} \\
(a) & (b)
\end{tabular}
\caption{Performance of baseline (\emph{adv-}) and \emph{RAD} (\emph{rob-}) as a function of adversarial budget for Enron dataset with binary features testing on adversarial instances using SVM with RBF kernel. (a) query budget $Q=30$, (b) query budget $Q=40$.}
\label{fig:constraint_svm}
\end{figure}

\begin{figure}[tbfh]
\centering 
\begin{tabular}{cc}
\includegraphics[scale=0.23]{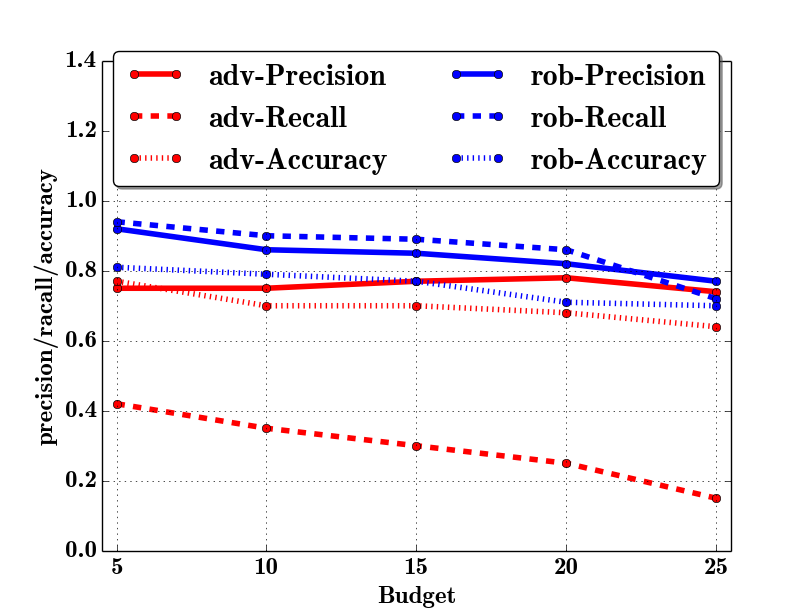} &
\includegraphics[scale=0.23]{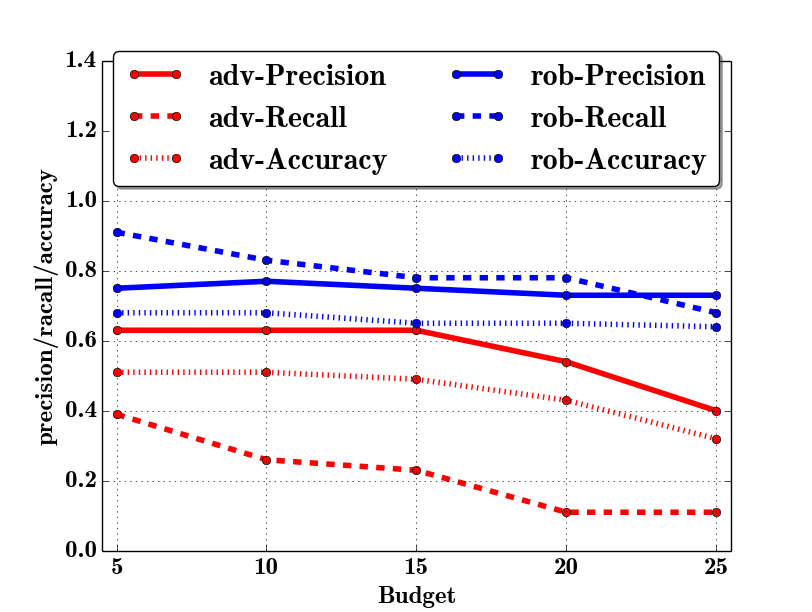} \\
(a) & (b)
\end{tabular}
\caption{Performance of baseline (\emph{adv-}) and \emph{RAD} (\emph{rob-}) as a function of adversarial budget for Enron dataset with binary features testing on adversarial instances using 3-layer NN. (a) query budget $Q=30$, (b) query budget $Q=40$.}
\label{fig:constraint_3nn}
\end{figure}

\subsection*{Experiments with Continuous Feature Space}

\begin{figure}[H]
\centering 
\begin{tabular}{cccc}
\includegraphics[scale=1.0]{images/logistic/0.png} &
\includegraphics[scale=1.0]{images/logistic/3.png} &
\includegraphics[scale=1.0]{images/logistic/6.png} &
\includegraphics[scale=1.0]{images/logistic/8.png}\\
\includegraphics[scale=1.0]{images/svm/0.png} &
\includegraphics[scale=1.0]{images/svm/3.png} &
\includegraphics[scale=1.0]{images/svm/6.png} &
\includegraphics[scale=1.0]{images/svm/8.png} \\
\includegraphics[scale=1.0]{images/nn1/0.png} &
\includegraphics[scale=1.0]{images/nn1/3.png} &
\includegraphics[scale=1.0]{images/nn1/6.png} &
\includegraphics[scale=1.0]{images/nn1/8.png}\\
\includegraphics[scale=1.0]{images/nn3/0.png} &
\includegraphics[scale=1.0]{images/nn3/3.png} &
\includegraphics[scale=1.0]{images/nn3/6.png} &
\includegraphics[scale=1.0]{images/nn3/9.png}
\end{tabular}
\caption{Example modification of digit images (MNIST data) as
  $\lambda$ decreases (left-to-right) for logistic regression,
SVM, 1-layer NN, and 3-layer NN (rows 1-4 respectively).}
\label{fig:visual_logistic}
\end{figure}

In Figure~\ref{fig:visual_logistic} we visualize the relative
vulnerability of the different classifiers, as well as effectiveness
of our general-purpose evasion methods based on coordinate greedy.
Each row corresponds to a classifier, and moving right within a row
represents decreasing $\lambda$ (allowing attacks to make more
substantial modifications to the image in an effort to evade correct
classification).
We can observe that NN classifiers require more substantial changes to
the images to evade, ultimately making these entirely unlike the
original.
In contrast, logistic regression is quite vulnerable: the digit
remains largely recognizable even after evasion attacks.

\subsection*{Experiments with Discrete Feature Space}

%\subsection{Discrete Feature Space}

Considering now data sets with binary features, we use the Enron data
with a bag-of-words feature representation, for a total of 2000
features.
We compare Naive Bayes (NB), logistic regression, SVM, and a 3-layer
neural network.
Our comparison involves both the baseline, and \emph{RAD}
implementations of these, using the same metrics as above.

\begin{figure*}[tbfh]
\centering 
\begin{tabular}{cccc}
\includegraphics[scale=0.21]{images/nb_model2_2000.png} &
\includegraphics[scale=0.21]{images/logistic_binary2000.png} \\
(a) & (b) \\
\includegraphics[scale=0.21]{images/binary2_enron_svm2000.png} &
\includegraphics[scale=0.21]{images/nn3_enron_model2_2000.png}\\
(c) & (d)
%(a) & (b)
\end{tabular}
\caption{Performance of baseline (\emph{adv-}) and \emph{RAD} (\emph{rob-}) implementations of (a) Naive
  Bayes, (b) logistic regression, (c) SVM, and (d) 3-layer NN, using binary features testing on adversarial instances.}
\label{fig:enron_binary}
\end{figure*}

Figure~\ref{fig:enron_binary} confirms the effectiveness of
\emph{RAD}: every algorithm is substantially more robust to evasion
with retraining, compared to baseline implementation.
Most of the algorithms can obtain extremely high accuracy on this data
with the bag-of-words feature representation.
However, a 3-layer neural network is now
\emph{less} robust than the other algorithms, unlike in the
experiments with continuous features.
Indeed, Goodfellow et
al.~\cite{goodfellow2014explaining} similarly observe the relative
fragility of NN to evasion attacks.

\subsection*{Experiments with Multi-class Classification}
% \subsection*{Multi-class classification}
\label{s:multiclass}

Discussion so far dealt entirely with binary classification.
We now observe that extending it to multi-class problems is quite direct.
Specifically, while previously the attacker aimed to make an instance classified as $+1$ (malicious) into a benign instance ($-1$), for a general label set $Y$, we can define a malicious set $M \subset Y$ and a target set $T \subset Y$, with $M \cap T = \emptyset$, where every entity represented by a feature vector $x$ with a label $y \in M$ aims to transform $x$ so that its label is changed to $T$.
In this setting, let $g(x) = \arg\max_{y \in Y} f(x,y)$.
We can then use the following empirical risk function:
\begin{equation}
\sum_{i : y_i \notin M} l( g_\beta(x_i),y_i) + \sum_{i:y_i \in M} l( g_\beta(\mathcal{O}(\beta,x_i)),y_i) + \lambda ||\beta||_p,
\end{equation}
where $\mathcal{O}$ aims to transform instances $x_i$ so that $g_\beta(\mathcal{O}(\beta,x_i)) \in T$.
The relaxed version of the adversarial problem can then be generalized to
\[
\min_{x,y \in T} -f(x,y) + c(x,x_i).
\]
For a finite target set $T$, this problem is equivalent to taking the best solution of a finite collection of problems identical to Problem 5.
% Problem~\ref{eq:attacker_q}.

\leaveout{

Multi-class classification problems are also interesting ones in practical, can these tasks can be decomposed into several binary classification tasks that can be solved efficiently using binary classifiers. Such decomposition can apply one-versus-all (OVA) strategy to reduce the problem of classifying among $M$ classes into $M$ binary problems, where each problem discriminated a given class from the other $M-1$ classes \cite{sejnowski1987parallel}. Alternatively,  each lass can also be compared to each other class via all-versus-all (AVA) strategy \cite{friedman1996another,hsu2002comparison}. In AVA, a binary classifier is built to discriminate between each pair of classes, while discarding the rest of the classes, which requires building $\frac{M(M-1)}{2}$ binary classifiers. 

Let $S$ denote the whole class space, where $ S = \{s_1, s_2, ...,s_k\}$ indicates the $i$th class. We use $p(s_i| x_i)$ to represent the posterior probability of the learner. For the adversarial instance $x^A$, whose class is $s_a$ $\left( p(s_a|x^A) > p(s_{i=1}^k|x^A) \right)$, attacker tries to generate $x'$ by modifying $x^A$ to deviate from the original class $s_a$. Here we apply $p(s_i|x_i)$ to represent the posterior probability of instance $x_i$ learned by the classifier. In OVA, the adversary is trying to solve the following problem.

\begin{equation}
\label{eq_R_avail}
\begin{aligned}
&\max \limits_{x',s_l \in S, l \ne a} p(s_l|x') \\
% s.t. \qquad &p(c_a|x^A) > p(c_{i=1}^k|x^A) \\
s.t. \qquad & c(x^A,x') \leq B
\end{aligned} 
\end{equation}

In AVA, the adversary needs to solve the optimization problem as below.

\begin{equation}
\label{eq_R_avail}
\begin{aligned}
&\max \limits_{x',s_i \in S \backslash s_a} \mathbbm{1}_{p(s_a|x') < p(s_i|x')} \\
% s.t. \qquad &p(c_a|x_i) > p(c_{i=1}^k|x_i) \\
s.t. \qquad & c(x^A,x') \leq B
\end{aligned} 
\end{equation}

Therefore, both of these cases can be reformulated similarly as the binary classification case as below.
\begin{equation}
\label{eq:attacker_multi}
\min \limits_{x'} Q(x') :=  p(s_a|x')+\exp \left({\lambda \sqrt{\sum \limits_{i = 1}^n{ (x^A}_i -{x'}_i)^2 +1}} \right)
\end{equation}

In this multi-class classification problem, assume there are two set of classes: malicious set (F) and target set (T). It is nature to assume that the two set of classes are not overlapped, since the adversaries usually try to evade the common detection class and modify their malicious instances to the closest benign classes. For example, a malware can be classified into a class containing various types of virus. However, the attacker only cares to modify such malware to evade the class of virus family to the other set of benign categories, such as compiler or business applications. These two set of classes are naturally not overlapped and the attacker does not really care which specific benign class the manipulated malware belongs to as long as it can evade the virus detection. 
Therefore the multi-class learning loss for defender in adversarial environments becomes

\begin{small}
\begin{equation}
\min \limits_{\beta} {L_{D_M}}^A := \sum \limits_{y_i \in T} l_\beta( x_i) + \sum \limits_{y_i \in F}  l_\beta( \widetilde{x_i}) + \lambda ||\beta||_p.
\end{equation}
\end{small}

\begin{proposition}
\label{prop:defender_loss_multi}
The upper bound of defender's loss for multi-class classification in adversarial environments is 
${L_D}^A \leq \sum \limits_{x_i \in X \cup R} l_\beta( x_i) + \alpha ||\beta||_p$ when $f(x)$ is convex and differentiable, where $X$ is the original training set and $R$ represents the total added instances.
\end{proposition}

\begin{proof}
This can be proved based Proposition~\ref{prop:defender_loss} by projecting the class set T as -1 and F as 1.
\end{proof}

% compare the retraining with bagging?
}

To evaluate the effectiveness of \emph{RAD}, and resilience of
baseline algorithms, in multi-class classification settings, we use
the MNIST dataset and aim to correctly identify digits based on their
images.
Our comparison involves SVM and 3-layer neural network (results for
NN-1 are similar).
We use $M = \{1,4\}$ as the malicious class (that is, instances
corresponding to digits $1$ and $4$ are malicious), and $T = \{2,7\}$
is the set of benign labels (what malicious instances wish to be
classified as).
\begin{figure}[tbfh]
\centering 
\begin{tabular}{cccc}
\includegraphics[scale=0.23]{images/multi/multi_svm784.png} &
\includegraphics[scale=0.23]{images/multi/multi_nn3_784.png} &
\end{tabular}
\caption{Performance of baseline (\emph{adv-}) and RAD (\emph{rob-}) implementations of (a) multi-class SVM
  and (b) multi-class 3-layer NN, using MNIST dataset testing on adversarial instances.}
\label{fig:enron_binary_multiclass}
\end{figure}
The results, shown in Figure~\ref{fig:enron_binary_multiclass} are largely
consistent with our previous observations: both SVM and 3-layer NN
perform well when retrained with \emph{RAD}, with near-perfect
accuracy despite adversarial evasion attempts.
Moreover, \emph{RAD} significantly boosts robustness to evasion,
particularly when $\lambda$ is small (adversary who is not very
sensitive to evasion costs).

\begin{figure}[h]
\centering
\begin{tabular}{cccc}
\includegraphics[scale=1.0]{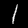} &
\includegraphics[scale=1.0]{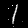} &
\includegraphics[scale=1.0]{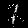} &
\includegraphics[scale=1.0]{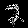} \\
\includegraphics[scale=1.0]{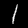} &
\includegraphics[scale=1.0]{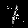} &
\includegraphics[scale=1.0]{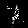} &
\includegraphics[scale=1.0]{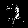} \\
\end{tabular}
\caption{Visualization of modification attacks with decreasing the
  cost sensitivity parameter $\lambda$ (from left to right), to change
  1 to the set \{2,7\}. The rows correspond to SVM and
  3-layer NN, respectively.}
\label{fig:visual_multi_1}
\end{figure}
Figure~\ref{fig:visual_multi_1} offers a visual demonstration of the relative
effectiveness of attacks on the baseline implementation of SVM and 1-
and 3-layer neural networks.
Here, we can observe that a significant change is required to evade
the linear SVM, with the digit having to nearly resemble a 2 after
modification.
In contrast, significantly less noise is added to the neural network
in effecting evasion.

\leaveout{
% visual results for svm 1
\begin{figure}[tbfh]
\centering 
\begin{tabular}{cccc}
\includegraphics[scale=1.00]{images/multi/svm_1/0.png} &
\includegraphics[scale=1.00]{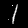} &
\includegraphics[scale=1.00]{images/multi/svm_1/3.png} &
\includegraphics[scale=1.00]{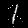} \\
\includegraphics[scale=1.00]{images/multi/svm_1/5.png} &
\includegraphics[scale=1.00]{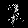} &
\includegraphics[scale=1.00]{images/multi/svm_1/8.png} &
\includegraphics[scale=1.00]{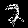} \\
\end{tabular}
\caption{Modification process of attacker for SVM with linear kernel by decreasing the cost sensitivity parameter $\lambda$ to change number ``1" to other classes (2,7).}
\label{fig:visual_svm_multi_1}
\end{figure}

% visual results for svm 4
\begin{figure}[tbfh]
\centering 
\begin{tabular}{cccc}
\includegraphics[scale=1.00]{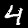} &
\includegraphics[scale=1.00]{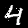} &
\includegraphics[scale=1.00]{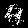} &
\includegraphics[scale=1.00]{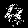} \\
\includegraphics[scale=1.00]{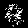} &
\includegraphics[scale=1.00]{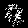} &
\includegraphics[scale=1.00]{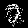} &
\includegraphics[scale=1.00]{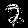} \\
\end{tabular}
\caption{Modification process of attacker for SVM with linear kernel by decreasing the cost sensitivity parameter $\lambda$ to change number ``4" to other classes (2,7).}
\label{fig:visual_svm_multi_4}
\end{figure}

% visual results for svm 1
\begin{figure}[tbfh]
\centering 
\begin{tabular}{cccc}
\includegraphics[scale=1.00]{images/multi/nn_1/0.png} &
\includegraphics[scale=1.00]{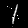} &
\includegraphics[scale=1.00]{images/multi/nn_1/3.png} &
\includegraphics[scale=1.00]{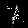} \\
\includegraphics[scale=1.00]{images/multi/nn_1/5.png} &
\includegraphics[scale=1.00]{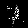} &
\includegraphics[scale=1.00]{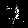} &
\includegraphics[scale=1.00]{images/multi/nn_1/8.png} \\
\end{tabular}
\caption{Modification process of attacker for Neural network with 3 hidden layers by decreasing the cost sensitivity parameter $\lambda$ to change number ``1" to other classes (2,7).}
\label{fig:visual_nn3_multi_1}
\end{figure}

% visual results for svm 4
\begin{figure}[tbfh]
\centering 
\begin{tabular}{cccc}
\includegraphics[scale=1.00]{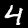} &
\includegraphics[scale=1.00]{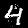} &
\includegraphics[scale=1.00]{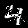} &
\includegraphics[scale=1.00]{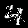} \\
\includegraphics[scale=1.00]{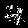} &
\includegraphics[scale=1.00]{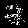} &
\includegraphics[scale=1.00]{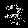} &
\includegraphics[scale=1.00]{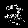} \\
\end{tabular}
\caption{Modification process of attacker for Neural network with 3 hidden layers by decreasing the cost sensitivity parameter $\lambda$ to change number ``4" to other classes (2,7).}
\label{fig:visual_nn3_multi_4}
\end{figure}
}

\subsection*{Evaluation of Cost Function Variations and Robustness to
  Misspecification}
Considering the variations of cost functions, here we evaluate the classification efficiency for various cost functions as well as different cost functions for defender and adversary, respectively. 
Figure~\ref{fig:cost_variation} shows the empirical evaluation results based on Enron dataset with binary features. It is shown that if both the defender and adversary apply the L1 (a) or quadratic cost functions (b), it is easy to defend the malicious manipulations. Even the defender mistakenly evaluate the adversarial cost models as shown in Figure~\ref{fig:cost_variation} (c), $RAD$ framework can still defend the attack strategies efficiently.

\begin{figure*}[tbfh]
\centering 
\begin{tabular}{cccc}
\includegraphics[scale=0.193]{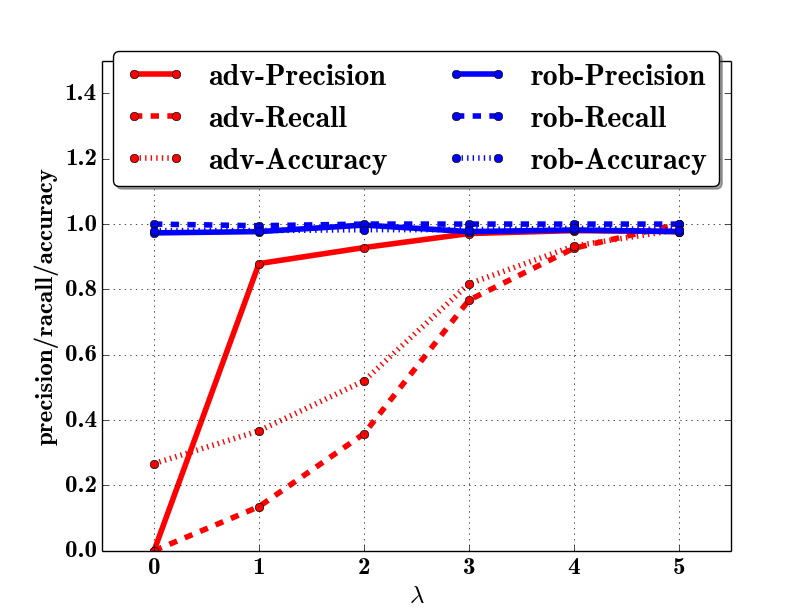} &
\includegraphics[scale=0.193]{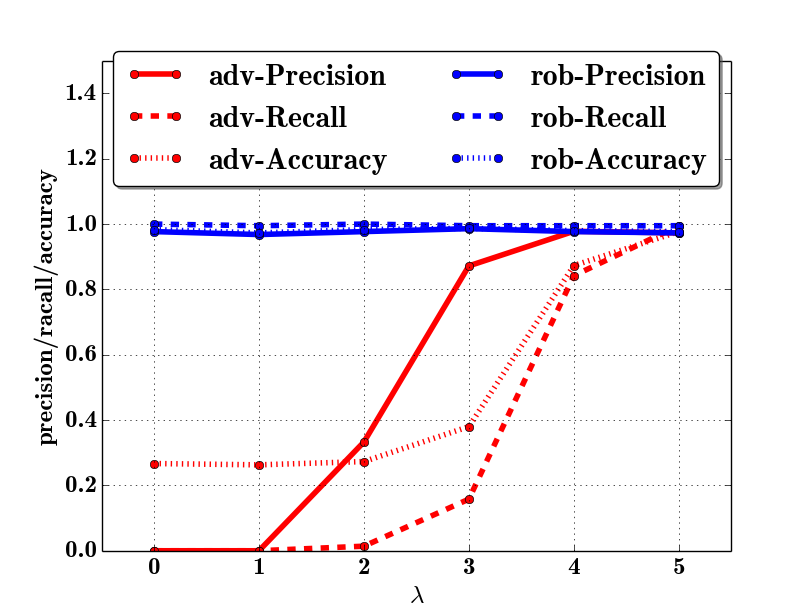} &
\includegraphics[scale=0.193]{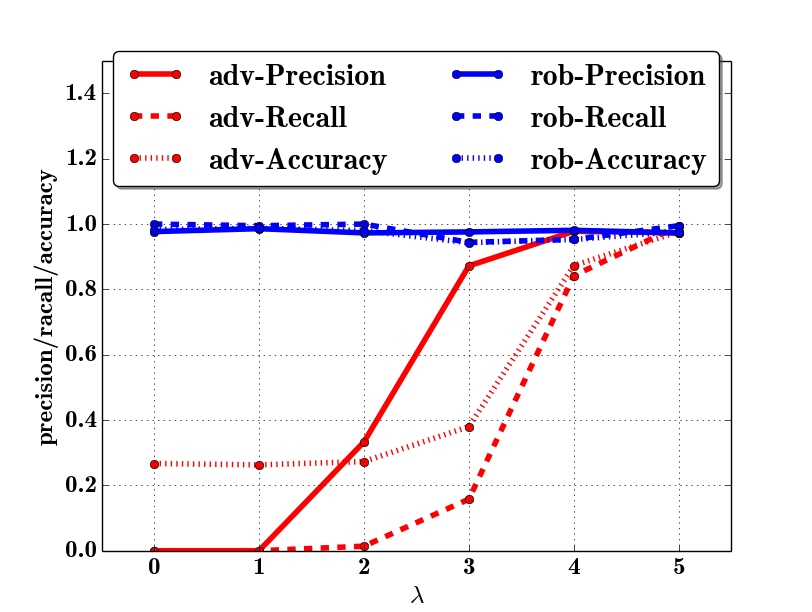} \\
(a) & (b) & (c) 
\end{tabular}
\caption{Performance of baseline (\emph{adv-}) and \emph{RAD} (\emph{rob-}) as a function of cost
  sensitivity $\lambda$ for Enron dataset with binary features testing on adversarial instances. (a) both defender and adversary use the L1 distance cost function, (b) both defender and adversary use the quadratic distance cost function, (c) adversary uses quadratic cost function while defender estimates it based on exponential cost.
  % (d) exponential for adversary and L2 for defender.
  }
\label{fig:cost_variation}
\end{figure*}

\begin{figure*}[tbfh]
\centering 
\begin{tabular}{cccc}
\includegraphics[scale=0.23]{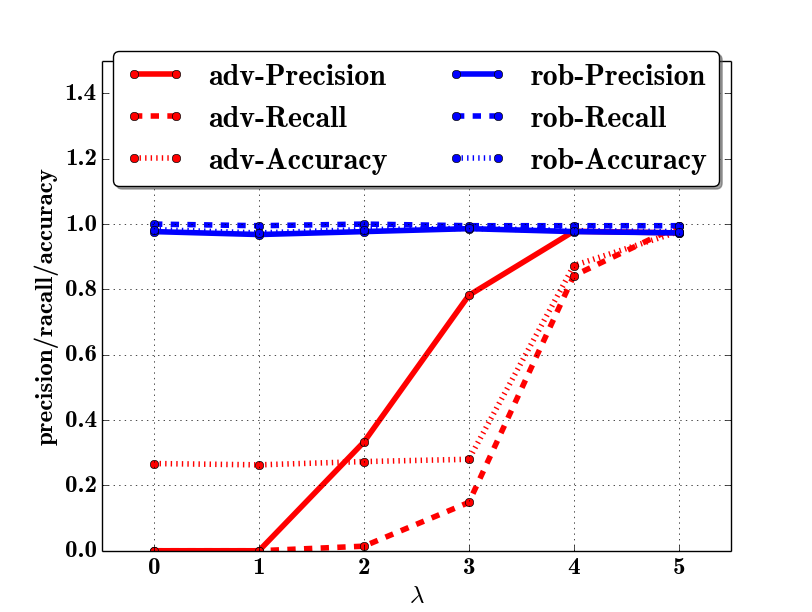} &
\includegraphics[scale=0.23]{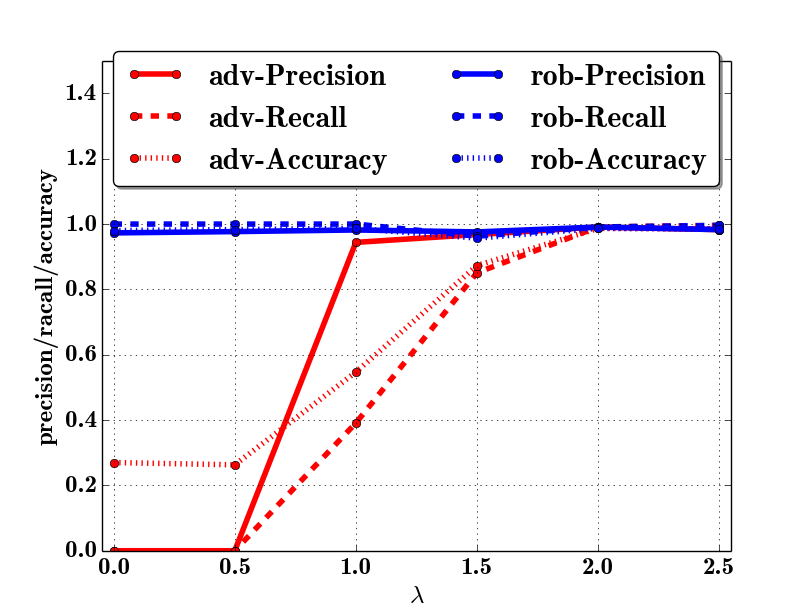} \\
(a) & (b) 
\end{tabular}
\caption{Performance of baseline (\emph{adv-}) and \emph{RAD} (\emph{rob-}) as a function of cost
  sensitivity $\lambda$ for Enron dataset with binary features testing on adversarial instances. (a) both defender and adversary use the equivalence-based cost function, (b) adversary uses equivalence-based cost function while defender estimates it based on exponential cost.
  }
\label{fig:cost_equivalence}
\end{figure*}

\subsection*{Experiments for Clustering Malicious Instances}
To efficiently speed up the proposed algorithm, here we cluster the malicious instances and use the
center of each cluster to generate the potential ``evasion" instances for the retraining framework.
Figure~\ref{fig:clustering_res} shows that the running time can be reduced by applying the clustering algorithm to the original malicious instances and the classification performance stays pretty stable for different learning models. 

\begin{figure*}[tbfh]
\centering 
\begin{tabular}{cccc}
\includegraphics[scale=0.23]{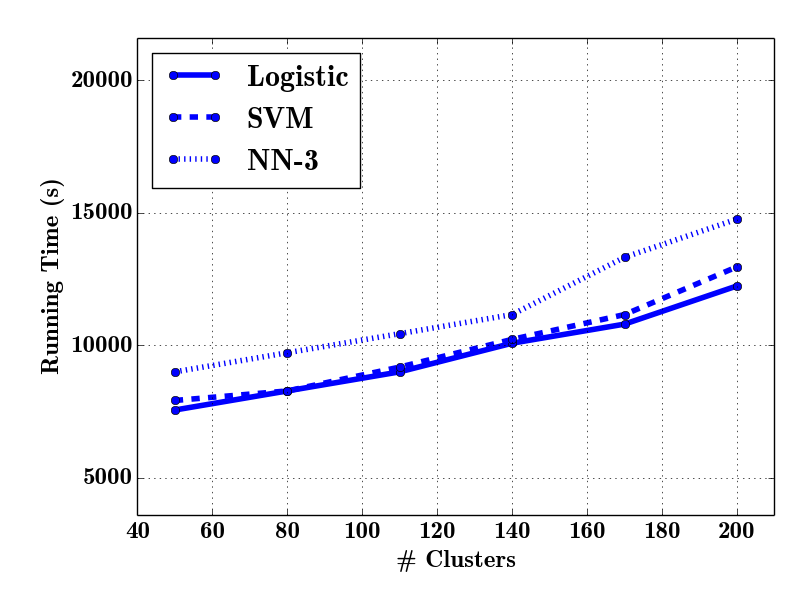} &
\includegraphics[scale=0.23]{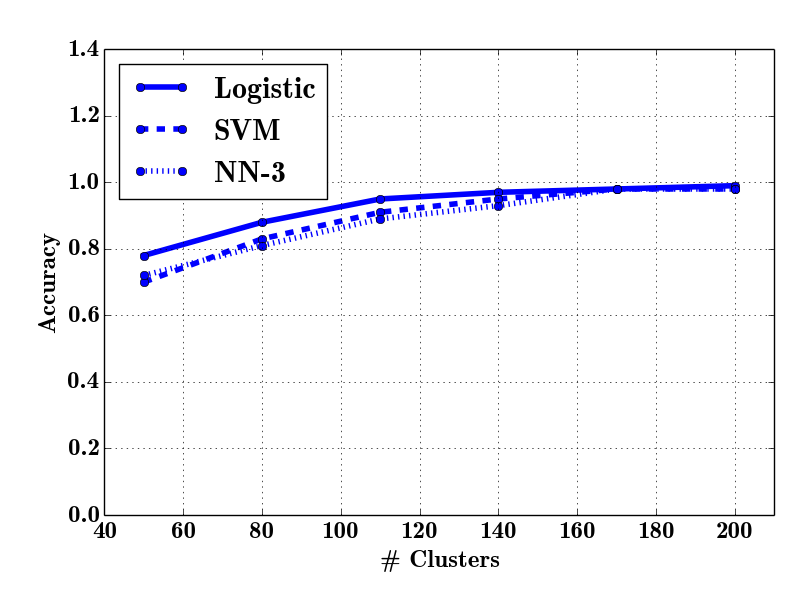} \\
(a) & (b) 
\end{tabular}
\caption{Performance of different learning models based on the number of clusters for Enron dataset testing on adversarial instances. (a) Running time, (b) classification accuracy of \emph{RAD}.
  }
\label{fig:clustering_res}
\end{figure*}

\end{document}